\def \basp {0pt}
\def \insp {0pt}
\newcommand{\CASE}[1]{\STATE \textbf{case} #1\textbf{:} \begin{ALC@g}}
\newcommand{\ENDCASE}{\end{ALC@g}}
\newcommand{\DEFAULT}{\STATE \textbf{default:} \begin{ALC@g}}
\newcommand{\ENDDEFAULT}{\end{ALC@g}}
\newcommand{\DEFAULTLINE}[1]{\STATE \textbf{default:} }
\newcommand{\RNum}[1]{\uppercase\expandafter{\romannumeral #1\relax}}
\newcommand\norm[1]{\left\lVert#1\right\rVert}
\newcounter{casenum}
\newcounter{assnum}
\newcommand{\br}{\mathbb{R}}
\DeclareMathOperator{\Ex}{\mathbb{E}}
\DeclareMathOperator*{\argmax}{arg\,max}
\theoremstyle{plain}
\newtheorem{theorem}{Theorem}[section]
\newtheorem{lemma}[theorem]{Lemma}
\theoremstyle{definition}
\newtheorem{definition}[theorem]{Definition}
\theoremstyle{remark}
\title{\LARGE \bf
Sample-Optimal Zero-Violation Safety For Continuous Control
}
\author{Ritabrata Ray$^{1}$, Yorie Nakahira$^{{1}}$, and Soummya Kar$^{1}$
\thanks{$^{1}$Ritabrata Ray, Yorie Nakahira and Soummya Kar are with the Department of Electrical \& Computer Engineering, 
        Carnegie Mellon University, Pittsburgh, PA 15213, USA
        {\tt\small \{ritabrar, ynakahir, soummyak\} @andrew.cmu.edu}}%
}
\begin{document}

\maketitle
\thispagestyle{empty}
\pagestyle{empty}

\begin{abstract}

In this paper, we study the problem of ensuring safety with a few shots of samples for partially unknown systems. We first characterize a fundamental limit when producing safe actions is not possible due to insufficient information or samples. Then, we develop a technique that can generate provably safe actions and recovery (stabilizing) behaviors using a minimum number of samples. In the performance analysis, we also establish Nagumo's theorem-like results with
relaxed assumptions, which is potentially useful in other contexts. Finally, we discuss how the proposed method can be integrated into the policy gradient algorithm to assure safety and stability with a handful of samples without stabilizing initial policies or generative models to probe safe actions. 

\end{abstract}

\section{Introduction}
\label{sec: Intro}

Assuring zero-violation safety (the system never gets unsafe) in uncertain systems is challenging when sufficient samples or generative models are not available. System identification and construction of generative models usually require a large number of samples to be gathered from exploring physical systems, but unsafe actions during the exploration process can have significant consequences. Existing literature has primarily focused on situations when the system dynamics are known or when the outcome of actions can be probed from a generative model. But these methods still require a certain number of samples before safe actions can be ensured. 

Motivated by this challenge, we investigate the following questions related to safe learning and control problems in unknown dynamics: 
\begin{enumerate}[topsep=\basp,itemsep=0pt,partopsep=0pt, parsep=\insp]
\item Fundamental limits: What are the fundamental limits in ensuring zero-violation safety? 
\item Achievable performance and efficient algorithms: What is the minimal information required to achieve zero-violation safety at the fundamental limits? 
How to design a computational algorithm that realizes this performance?
\item Modular architecture to assist safe control and learning: 
How to integrate the computational algorithm into a nominal (existing) control loop? How to integrate the algorithm into a reinforcement learning policy? 
\end{enumerate}



  
To answer questions 1 and 2, we first show the fundamental limits and develop an algorithm that operates at the limits. 
This method is constructed by re-deriving forward invariance conditions analogous to Nagumo's theorem~(\cite{Nagumo1942berDL}) for right-continuous dynamics and using the conditions to generate safe actions based on samples from an infinitesimal history. We then apply this idea in policy gradient reinforcement learning to ensure safety during exploration and after convergence. The merits of the proposed method are summarized below.  
\begin{enumerate} 
    \item \label{Mr: Safety} \textit{Safety with instantaneous samples:} The proposed technique constructs provably-safe actions using the instantaneous histories of the state and action (Theorems~\ref{thm: sample case optimality},~\ref{thm: main result}, Figures~\ref{fig: Bar Adaptive Control Plot}). 
    \item \label{Mr: Recovery Extension} \textit{Guaranteed recovery speed:} The proposed technique is able to guarantee recovery from unsafe states using only instantaneous histories (Theorem \ref{thm: main result}, Figure~\ref{fig: Bar Recovery Plot}).
    \item \label{Mr: Model-Free RL Tool} \textit{Applicability to policy gradient reinforcement learning:} Our method can be integrated into a policy-gradient learning algorithm. The integrated method will allow safety during exploration (Theorem~\ref{thm: main result}, Figure~\ref{fig: Model-Free Safety Rate}) and learn optimal policies with minimum degradation in convergence and optimal cost (Theorem~\ref{thm: Gradient Unbiased Estimate}, Figure~\ref{fig: Model-Free RL Convergence}). 
\end{enumerate}
The requirement of instantaneous histories for safety and recovery for our method is at the fundamental limit below which zero-violation safety is impossible. 
The proposed method can be modularly combined with nominal control policies as in algorithm~\ref{alg: main algorithm} and applied when a policy is iteratively learned as in algorithm~\ref{alg: REINFORCE with Safety}.



\textbf{Related Work.} Safe decision-making for unknown system dynamics has been studied in several contexts, such as safe control, adaptive control, and reinforcement learning~(\cite{https://doi.org/10.48550/arxiv.2205.10330, JMLR:v16:garcia15a}). 
Many system identification techniques are developed to learn the models and parameters of system dynamics using samples of system trajectories~(\cite{book,GROVER2021346, 9682628}). 
The identified system models are often used in safe control, robust control, and optimization-based control to decide control actions or policies~(\cite{https://doi.org/10.48550/arxiv.1906.11392}). Barrier-function-based techniques are used to characterize the set of safe control actions by constraining the evolution of barrier function values over time~(\cite{9147463,9129764,9196709,9683085,https://doi.org/10.48550/arxiv.2005.07284,https://doi.org/10.48550/arxiv.2010.16001,https://doi.org/10.48550/arxiv.2104.14030}). This approach is also combined with adaptive control techniques to adapt to changing system parameters~(\cite{9147463,9129764,9196709,9683085}). 
Methods based on this approach often require known dynamical systems, bounded or small parametric uncertainties in the dynamics model, or the availability of simulating models from which safe/unsafe actions can be probed.

Online learning is used to balance exploration (to learn the system) and exploitation (to optimize performance) and obtain control policies that achieve sub-linear regret~(\cite {https://doi.org/10.48550/arxiv.2207.14419,LC3}). These methods employ techniques such as uncertainty quantification with optimistic choices, Thompson sampling, or via reduction to combinatorial optimization problems like Convex Body Chasing (\cite{https://doi.org/10.48550/arxiv.2103.11055}). Often using Gaussian Process as priors to quantify the uncertainties ( assuming some regularity in the dynamics) in the system dynamics yields high probability safe exploration guarantees ~(\cite{https://doi.org/10.48550/arxiv.1705.08551,Ma_Shen_Bastani_Dinesh_2022,DBLP:journals/corr/abs-2006-09436,DBLP:journals/corr/abs-1903-02526,https://doi.org/10.48550/arxiv.1712.05556, pmlr-v37-sui15,https://doi.org/10.48550/arxiv.1606.04753}). Such methods usually require heavy computation and initial stabilizing policies.

Many techniques have also been developed in the context of safe reinforcement learning. 
Some methods impose safety requirements in reward functions~(\cite{https://doi.org/10.48550/arxiv.2205.11814, DONG202083}), constraints~(\cite{pmlr-v70-achiam17a,bharadhwaj2021conservative,ijcai2020p632,https://doi.org/10.48550/arxiv.2011.06882,vuong2018supervised,Yang2020Projection-Based}), in the Lagrangian to the cumulative reward objective~(\cite{ray2019benchmarking}).
In these techniques, the safety requirements are often imposed as chance constraints or in expectation, as opposed to hard deterministic constraints. 
Other methods use barrier-function-based approaches in reinforcement learning~(\cite{https://doi.org/10.48550/arxiv.2201.01918, zhao2021modelfree, https://doi.org/10.48550/arxiv.2004.07584,NEURIPS2018_4fe51490,DBLP:journals/corr/abs-1901-10031,DONG202083,DBLP:journals/corr/abs-2002-10126,DBLP:journals/corr/abs-2107-13944,article2,article,zhao2021modelfree}). 
Many of these techniques are concerned with convergence to near-optimal safe policies or learning accurate model parameters with sufficiently many samples.
On the other hand, we consider a complementary problem: what can be achieved during the initial learning phase and when available information is at the fundamental limits.

\textbf{Organization of the paper.} This paper is organized as follows. We present the model and problem statement in section~\ref{sec: system description}; fundamental limits, proposed techniques, and performance guarantees in section~\ref{sec: proposed method}; an application to safe learning in section~\ref{sec: RL applications}; experiments in section~\ref{sec: simulations}; and conclusion in section~\ref{sec: Paper Conclusion}.

\textbf{Notation.} We adopt the following notations. For any differentiable function $f:\br^d\rightarrow \br,$ let $\nabla f$ denotes its gradient with respect to $x$, i.e. $\nabla_x f(x)$. Let $x_t: \br_{\geq 0}\rightarrow \br^d$ denote the state of any dynamical system as a function of time $t$, then $\dot f$ denotes the time derivative of the function $f(x_t)$ with respect to $t$. $\dot f^{+}$ denotes its right hand time derivative, and $\dot f^{-}$ denotes its left hand time derivative: i.e., $\dot f^{+}=\lim_{\delta t \rightarrow 0^+}\frac{f(x_{t+\delta t})-f(x_{t})}{\delta t}$, and $\dot f^{-}=\lim_{\delta t \rightarrow 0^+}\frac{f(x_{t})-f(x_{t-\delta t})}{\delta t}$.
For any two vectors $u,v$ of the same dimensions, $u \cdot v$ or $\langle u, v\rangle$ denote their standard inner product in Euclidean space. We use $\Pr[A]$, $\Pr[A|B]$ to denote probabilities of events $A$, and conditional probability of $A$ given $B$ respectively. Given random variables $X$ and $Y$, $p(x,y)$ denotes their joint density function, and $p(x|y)$ denotes the conditional probability density function at $X=x$, and $Y=y$. 


\section{Model and Problem Statement}
\label{sec: system description}
This section introduces the control system dynamics, specifications of safety requirements, and the safe control and learning problems. 

\textbf{System model.} \label{sec: system settings}
We consider a continuous-time system with the following dynamics:
\begin{align} \label{eq: sys_dynamics}
        \dot x^+_t= f(x_t)+g(x_t)u_t,
\end{align}
where $x \in \mathcal{X} \subseteq \br^d$ is the state in state space $\mathcal{X}$, $u \in \mathcal{U} \subseteq \br^p$ is the control action in action space $\mathcal{U}$, where $u_t$ as a function of $t$ is right continuous, has left hand limits at all $t$, and has only finitely many jumps in any finite interval of time. $f:\mathcal{X} \rightarrow \br^d$ and $g:\mathcal{X} \rightarrow \br^{d\times p}$ are Lipschitz continuous functions. Thus,~\eqref{eq: sys_dynamics} has a unique solution $x_t$, which is continuous in time $t$. We assume that the function $f(.)$ is completely unknown but bounded for a bounded input $x$, and the function $g(.)$ is also bounded for any bounded input $x$ and satisfies the following two assumptions. 
\begin{enumerate}[topsep=\basp,itemsep=0pt,partopsep=0pt, parsep=\insp]
    \item \label{Assumption: SVD} Assumption {1}.
    At each time $t$, matrix $g(x_t)$ admits the singular value decomposition
    \begin{equation}\label{eq: SVD of g(.)}
        g(x_t)=U_t\Sigma_tV_t^\intercal,
    \end{equation}
    where $U_t \in \br^{d \times d}$ and $V_t \in \br^{p \times p}$ are known orthogonal matrices. 
    \item \label{Assumption: Singular value bounds} Assumption {2}. Let $\lambda_{1,t}\geq\lambda_{2,t}\geq\ldots\lambda_{t,k_t}>0$ be the non-zero singular values of matrix $g(x_t)$, i.e., 
    \begin{equation}\label{eq: Singular Value Matrix}
        \Sigma_{t,(i,j)}=\begin{cases}
                            \lambda_{i,t},\text{ if }i=j\leq k_t\\
                            0,\text{ otherwise,}
                         \end{cases}
    \end{equation}
    with $k_t = rank(g(x_t)) = d$. The value of $\lambda$ is bounded as
    \begin{equation}\label{eq: singular value bounds}
        0<m_i \hat{\lambda}_{i,t} \leq \lambda_{i,t} \leq M_i \hat{\lambda}_{i,t},~~\forall i \in \{1,\ldots,k_t\},
    \end{equation}
    where $M_i, m_i, \hat{\lambda}_{i,t} > 0$ are some known positive bounds/estimates. 
\end{enumerate}




Adaptive safe control techniques such as \cite{9147463,9129764,9196709,9683085} assume $g(.)$ is fully known. Since this paper also assumes uncertainty in $g().$ along with full uncertainty in $f(.)$, the above assumptions~\ref{Assumption: SVD}  and~\ref{Assumption: Singular value bounds} about partial knowledge about $g$ are weaker than the above works. 
We go beyond such settings, and we only restrict ourselves to cases when matrices $(U_t, V_t)$ and the range and sign of $\hat{\lambda}_{i,t}$ can be inferred from the mechanics/structure of the system. For example, the sign of $\hat{\lambda}_{i,t}$ can be inferred from the fact that a bicycle (vehicle) turns left when the driving wheel is steered left, and right otherwise (see the example in appendix~\ref{sec: example} of the full paper). 

\textbf{Safety specifications.} \label{sec: safety-definitions}  A system is \textit{safe} at time $t$ when the state $x_t$ lies within a certain region $\mathcal{S} \subset \br^d$, denoted as the safe set. We characterize the safe set $\mathcal{S}$ by the level set of a continuously differentiable \textit{barrier function} $\phi : \br^d \rightarrow \br$ as follows:
\begin{align}\label{eq: CBF characterization of Safe region.}
    \mathcal{S}\triangleq\{ x: \phi(x) \geq 0,~x \in \br^d \}.
\end{align}
Additionally, we define
\begin{equation}\label{eq: safety sub-region}
    \mathcal{S}_{\theta}\triangleq\left\{x:\phi(x)\geq \theta,~\theta \geq 0\right\} \subset \mathcal{S}
\end{equation}
as the $\theta$-super-level set of $\phi$, $\partial \mathcal{S}_{\theta}\triangleq\left\{ x: \phi(x)=\theta \right\}$ as the boundary of $ \mathcal{S}$, and $int(\mathcal{S}_{\theta})\triangleq\mathcal{S}_{\theta} \setminus \partial \mathcal{S}_{\theta}$ as the interior of $\mathcal{S}_{\theta}$. In this paper, we only consider bounded safe sets $\mathcal{S}_{\theta}$ i.e., $\|x\|_2 < \infty$ for every $x \in \mathcal{S}_\theta$. The safety requirement is stated in terms of {forward invariance}, {forward convergence}, and {forward persistence} conditions.

\begin{definition}
A set $\mathcal{S}$ is said to be \textit{forward invariant} with respect to the dynamics of $\{x_t\}_t$ if $x_0 \in \mathcal{S} \Longrightarrow x_t \in \mathcal{S},~\forall t \geq 0.$ A set $\mathcal{S}$ is said to be \textit{forward convergent} with respect to the dynamics of $\{x_t\}_t$ if, given $x_0 \notin S$, $\exists \tau \geq 0~~\text{such that } x_{t} \in \mathcal{S},~~\forall t \geq \tau.$ A set $\mathcal{S}$ is said to be \textit{forward persistent} if the set $\mathcal{S}$ is both forward invariant and forward convergent. 
\end{definition}

Safe adaptive control techniques typically ensure safety after sufficiently many samples (ranging from a few to dozens) are collected (\cite{9147463,9129764,9196709,9683085}) and safe learning methods often require generative models during training or the availability of Lyapunov or barrier functions constructed from system models (~\cite{pmlr-v120-taylor20a, wagener2021safe}). The number of samples needed to learn the model usually scales with the dimension,
in contrast, this paper investigates what is fundamentally impossible or achievable for zero-violation safety and presents a method that ensures forward invariance/convergence using a minimum number of samples independent of the state dimension, for safety-critical environments. 

\textbf{Problem statement and design objectives.}
Our goal is to develop a safe adaptation method and apply it to safe learning. The problem for each setting is stated below. 
In the setting of \textit{safe adaptation}, we assume the existence of a possibly stochastic \textit{nominal controller} of the form : 
\begin{equation}\label{eq: nominal controller}
    u_t=u_{nom}( \{x_\tau \}_{\tau \leq t} , \{y_\tau \}_{\tau \leq t} ),
\end{equation}
where $u_{nom}$ is a policy that has access to the history of the state $\{x_\tau \}_{\tau \leq t}$ and observation $\{y_\tau \}_{\tau \leq t}$ and predicts a  control action $u_t \in \mathcal{U}$. We assume that the control action predicted by the nominal controller is always bounded i.e., $\|u_{nom}\|_2 < \infty$. 
Special cases of \eqref{eq: nominal controller} are memory-less controllers or the ones that only use partial information of the available history. The observation $y_\tau$ can include variables such as rewards/costs, environmental/state variables, and design parameters. Considering a nominal controller of the form \eqref{eq: nominal controller}, we do not lose any generality with respect to any specific form of information constraint. 
The nominal controller, however is not necessarily safe, due to uncertainties in the system dynamics model used to design the controller. Let 
\begin{align} \label{eq: information-history} 
    I(t,\delta)\triangleq\{ \left(x_{\tau},u_{\tau'}\right) | \tau \in \mathcal{T}_{x}(t,\delta), \tau'\in \mathcal{T}_{u}(t,\delta) \}, 
\end{align}
denote the state and action histories from the time intervals 
\begin{align}\label{eq: trajectory-history}
     \mathcal{T}_{x}(t,\delta) &\triangleq \{\tau: \tau \leq t \}\cap \{\tau: \tau \geq \max\{0,t-\delta\} \}, \text{ and} \nonumber \\ 
     \mathcal{T}_{u}(t,\delta) &\triangleq\{\tau: \tau < t\}\cap\{ \tau: \tau \geq \max\{0,t-\delta\} \},
\end{align}
respectively. 
We want to design a policy of the form:
\begin{align}\label{eq:policy structure}
    u_t = u(u_{nom}, I(t,\delta))
\end{align}
which uses information $I(t,\delta)$, and any action prescribed by the nominal controller $u_{nom}$, to produce the action $u_t$. Even when the system dynamics change, one only requires a short state and action recent history $I(t,\delta)$ that is usable for inferring safe actions. We study the design of policy \eqref{eq:policy structure} that allows for zero-violation safety with small $\delta$ to avoid causing significant interruptions to the nominal controller \eqref{eq: nominal controller}. We note that for a sufficiently fine discretization of the dynamics, a small $\delta$, and a high-dimensional state with dimension $d$, the number of samples needed by our controller $|I(t,\delta)|$, could be significantly lower than the existing methods which need at least $d$ samples to learn the dynamics before computing a safe action.


\textit{Safe learning:} We consider an application of the above algorithm where it is used as a tool in conjunction with a policy gradient algorithm similar to the REINFORCE algorithm of \cite{williams1992simple}. We discretize the time $t$ into integer multiples of a sampling interval $T_s$ such that we only observe the state variables $\{s_n\}_{n \in \mathbb N_+}$ with $s_n = x_{nT_s}$, where we use the variable $n$ , both here and in section~\ref{sec: RL applications}, to index these discrete time steps. At each time step $n$, the agent receives a reward $r(s_n,u_n)$ for the corresponding state action pair $(s_n,u_n)$. The reward function $r: \mathcal{X}\times \mathcal{U}\rightarrow \br$ is stationary with respect to time and is designed according to the environment and the desired task. The learning algorithms employ stochastic policies to perform the task, i.e., at each step $n$, an action $a_n$ is sampled from the conditional probability density function $\pi_w(.|s_n)$ i.e.,  $p(a_n|s_n) = \pi_w(a_n|s_n)$. This policy $\pi_w$ is stochastic and may depend on the history of the trajectory available at time $n$. Therefore, such a stochastic policy can be seen as a stochastic nominal controller of the form~\eqref{eq: nominal controller} with output $a_n$. This action $a_n$ is further overwritten by our policy of the form~\eqref{eq:policy structure} to get the final control action $u_n$ which is played at time $n$.
The policy is parameterized by $w \in \mathcal{W}$ and is restricted to the policy class $\Pi=\left \{ \pi_w:w \in \mathcal{W}\right\}$. For instance, in the case of a neural network model, $w$ would denote the weights of the neural network, and $\Pi$ would denote the set of such networks. Here, the goal is to maximize the objective $J(w)$ over the policy class $\Pi$ with zero-violations of safety as described above. For learning, we use the following objective function:
\begin{equation}\label{eq: Objective function}
    J(w)=\Ex_{\pi_w} \left [ \sum_{n=0}^\infty \gamma^n r(s_n,u_n) \right ],
\end{equation}
with $\gamma \in [0,1)$ as the discounting factor, $s_{n+1} \sim P(s_{n+1}|s_n,u_n),~~ \forall n $. Here $P(s_{n+1}|s_n,u_n)$ denotes the transition probability matrix of the underlying dynamics described by~\eqref{eq: sys_dynamics} under zero order hold.


\section{Safe control algorithm} \label{sec: proposed method}
In this section, we first establish an impossibility result when safety cannot be ensured due to insufficient information. Then, we propose a method to produce safe actions in an extreme situation when generative models are not available and only a handful of usable samples are given. 
Recall from section \ref{sec: system description}, that we consider system \eqref{eq: sys_dynamics} with unknown $f(.)$, and $g(.)$ satisfying assumptions \ref{Assumption: SVD} and \ref{Assumption: Singular value bounds}, and a nominal controller of the form \eqref{eq: nominal controller}. 
Without knowing $f(.)$, no algorithm can ensure zero-violation safety using the information from just one sample. This fundamental limit is formally stated below and proved in Appendix~\ref{sec: fundamental limit proof}. 
\begin{theorem}\label{thm: sample case optimality}
     Assume that there exists a state $x$ in the boundary $\partial \mathcal{S}$ of the safety set $\mathcal{S}$ such that $\nabla \phi(x) \neq 0$. Consider system \eqref{eq: sys_dynamics} with unknown $f(.)$ and known $g(.)$. Given information $I(t,\delta)$ with $\delta=0$, no policy of the form~\eqref{eq:policy structure} can ensure zero-violation safety. 
\end{theorem}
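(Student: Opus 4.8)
The plan is to show that for an arbitrary policy of the form~\eqref{eq:policy structure} --- built on any nominal controller~\eqref{eq: nominal controller} --- one can choose an unknown drift $f$ and an initial condition, consistent with all standing assumptions, under which the closed loop leaves $\mathcal{S}$. First I would unpack the information set: with $\delta=0$ the index sets in~\eqref{eq: trajectory-history} collapse to $\mathcal{T}_x(t,0)=\{t\}$ and $\mathcal{T}_u(t,0)=\emptyset$, so $I(t,0)=\{x_t\}$ contains only the current state and no past state--action pair. Hence the action $u_t=u(u_{nom},I(t,0))$ is a (possibly randomized) function of $x_t$ alone: the nominal action at $t$ is itself computed from $\{x_\tau\}_{\tau\le t}$ and $\{y_\tau\}_{\tau\le t}$, none of which reveals anything about $f$, so at a fixed state the controller commits to an action law that cannot depend on which $f$ nature picked.

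Next I would localize the failure at a boundary point. By hypothesis there is $x^\ast\in\partial\mathcal{S}$ with $\nabla\phi(x^\ast)\neq 0$; run the system from $x_0=x^\ast$. Since $\mathcal{S}=\{x:\phi(x)\ge 0\}$ is closed we have $x_0\in\mathcal{S}$, so zero-violation safety demands $x_t\in\mathcal{S}$ for all $t\ge 0$. Let $u_0$ be the action the policy applies at $x^\ast$ and put $b=g(x^\ast)u_0$, a finite vector since admissible actions are bounded and $g$ is bounded, and --- this is the crux --- a vector that does not depend on $f$ (here $g$ is known). Now take $f$ to be any Lipschitz map that is bounded on bounded sets (for instance a constant, or a smooth compactly supported bump around $x^\ast$) with $f(x^\ast)=-b-\nabla\phi(x^\ast)$. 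Then the right derivative of $\phi$ along the trajectory at $t=0$ equals
\[
\nabla\phi(x^\ast)\cdot\big(f(x^\ast)+g(x^\ast)u_0\big)=-\|\nabla\phi(x^\ast)\|_2^2<0 .
\]
Since $\phi(x_0)=0$ and $t\mapsto\phi(x_t)$ is continuous with strictly negative right derivative at $0$, there is $\varepsilon>0$ with $\phi(x_t)<0$, i.e.\ $x_t\notin\mathcal{S}$, for every $t\in(0,\varepsilon)$ --- a safety violation. As the policy was arbitrary, no policy of the form~\eqref{eq:policy structure} can guarantee zero-violation safety.

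Two points need care. For a randomized policy, $u_0$ is drawn from a law depending only on $x^\ast$ while $f$ must be fixed before any realization; I would replace the pointwise cancellation by a uniform one, taking $f(x^\ast)=-(R+1)\,\nabla\phi(x^\ast)/\|\nabla\phi(x^\ast)\|_2$ where $R$ upper-bounds $\|g(x^\ast)u_0\|_2$ over the essential support of the action law, which yields right derivative $\le-\|\nabla\phi(x^\ast)\|_2<0$ almost surely and hence violation with probability one. The main obstacle is not any computation but keeping the quantifier order honest --- nature fixes $f$ after seeing the policy but without the action realization --- and checking that the constructed $f$ genuinely satisfies the regularity demands (Lipschitz, bounded on bounded inputs), which the constant/bump construction does. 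It is also worth recording why $\delta=0$ is fatal whereas $\delta>0$ will not be in the companion results: recovering $f(x_t)+g(x_t)u$ requires differencing the state across two instants together with the action applied in between, which is exactly the content of $I(t,\delta)$ for $\delta>0$ and is entirely absent when $\delta=0$.
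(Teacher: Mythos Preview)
Your argument is correct and follows essentially the same route as the paper: start at a boundary point $x^\ast$ with $\nabla\phi(x^\ast)\neq 0$, observe that with $\delta=0$ the controller's action $u_0$ cannot depend on $f$, and then adversarially choose $f$ so that $\dot\phi^+(x_0)=-\|\nabla\phi(x^\ast)\|^2<0$. Your treatment is in fact slightly more careful than the paper's, which does not separately address randomized policies or the Lipschitz regularity of the constructed $f$.
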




While it is impossible to ensure zero-violation safety with just one sample ($\delta=0$), under certain assumptions (assumptions~\ref{Assumption: SVD} and~\ref{Assumption: Singular value bounds}, and a continuously differentiable barrier function characterizing the safe set), there exists a policy of the form~\eqref{eq:policy structure} that guarantees zero-violation safety using as little information as $I(t,\delta)$ for any $\delta > 0$. This policy is formally presented in algorithm~\ref{alg: main algorithm} and described below. 

Algorithm~\ref{alg: main algorithm} takes as input the threshold $\theta$, and guarantees safety with respect to the safety subset $\mathcal{S}_{\theta}$. 
At each time $t$, the current state $x_t$ is observed, and $\phi(x_t)$ is computed. When the state is far from unsafe regions, \text{i.e.,} $\phi(x_t)>\theta>0$, the output of the nominal controller \eqref{eq: nominal controller} is used. Otherwise, when $\phi(x_t) \leq \theta$, actions are corrected as follows. 
Let $U_{1,t},\ldots,U_{d,t}$ denote the column vectors of the known SVD matrix $U_t$ at time $t$. Since $U_t$ is an orthogonal matrix, the vectors $U_{1,t},\ldots,U_{d,t}$ form an orthonormal basis of $\br^d$.
We compute the vector $\nabla \phi(x_t) \in \br^d$, and represent it with this new basis as:
\begin{equation}\label{eq: orthogonal decomposition of grad_phi}
    \nabla \phi(x_t)=\sum_{i=1}^d \beta_{i,t}U_{i,t},
\end{equation}
where $\beta_{i,t}$ is given by
\begin{equation}\label{eq: betas}
    \beta_{i,t}=\langle \nabla \phi(x_t), U_{i,t} \rangle,~~\forall i \in \{1,\ldots,d\}.
\end{equation}
The algorithm stores the immediate last state as $x_t^-$, and the immediate last control action as $u_{last}$ and uses them to compute $\dot x_t^-$ at each step. Then for our choice of the hyper-parameter of recovery rate $\eta>0$, we compute the parameter $\alpha_t$ as follows:
\begin{equation}\label{eq: alpha}
    \alpha_t=\frac{\langle \nabla \phi(x_t), \dot x_t^- \rangle - \eta}{\norm{\nabla \phi(x_t)}^2}.
\end{equation}
 Since the barrier function is continuously differentiable, and since $\nabla \phi (x) \neq 0,~\forall x \in \mathcal{S}$, the above $\alpha_t$ is well-defined.\\
Next we compute the matrix $\Gamma_t \in \br^{d \times d}$ which satisfies the following $d$ constraints:
\begin{equation}\label{eq: linear inequalities defining Gamma}
    \begin{aligned}
    \frac{\alpha_t \beta_{i,t}}{M_i} &> \langle U_{i,t}, \Gamma_t \dot x_t^- \rangle \text{ if } \alpha_{t} \geq 0,\beta_{t,i} \geq 0 \\ 
    \frac{\alpha_t \beta_{i,t}}{m_i} &> \langle U_{i,t}, \Gamma_t \dot x_t^- \rangle \text{ if } \alpha_{t} < 0,\beta_{t,i} \geq 0, \\
    \frac{\alpha_t \beta_{i,t}}{m_i} &< \langle U_{i,t}, \Gamma_t \dot x_t^- \rangle \text{ if } \alpha_{t} \leq 0,\beta_{t,i} < 0, \\
    \frac{\alpha_t \beta_{i,t}}{M_i} &< \langle U_{i,t}, \Gamma_t \dot x_t^- \rangle \text{ if } \alpha_{t} > 0,\beta_{t,i} < 0, 
\end{aligned}
\end{equation}

for each $i \in \{ 1,\ldots,d\}$. Observe that the algorithm~\ref{alg: main algorithm} can always find a feasible $\Gamma_t$ satisfying
  \eqref{eq: linear inequalities defining Gamma}, since there are only $d$-linear constraints on the $d^2$ entries of the matrix $\Gamma_t$. \footnote{We have used cvxpy in our numerical simulations to compute this matrix.} 
The estimated singular value matrix $\hat{\Sigma}_t$, its Moore-Penrose pseudo-inverse $\hat{\Sigma}_t^+$, matrix $\hat{g}^+(x_t)$ are given by
\begin{align}
    &\hat{\Sigma}_{t,(i,j)}  =\begin{cases}
                \hat{\lambda}_{i,t}\text{, }i=j\leq k_t\\
                        0,\text{ otherwise}
                   \end{cases}
     \hat{\Sigma}^+_{t,(i,j)} = \begin{cases}
                                    \frac{1}{\hat{\lambda}_{i,t}}\text{, }i=j\leq k_t\\
                                    0,\text{ otherwise }
                              \end{cases} \label{eq: pseudoinverse of estimated singular value matrix} \\
     &\hat{g}^+(x_t) = V_t \hat{\Sigma}_t^+ U_t^\intercal, \label{eq: g inverse estimate}
\end{align}
respectively. Finally, the control action is chosen to be
 \begin{equation}\label{eq: correction control}
    u_{corr}=u_{last}-\hat{g}^+(x_t)\Gamma_t\dot x_t^-.
 \end{equation}


\begin{algorithm}[tb]
\caption{Safety and Recovery Algorithm}
\label{alg: main algorithm}
\textbf{Input:} Barrier Function $\phi(.)$, initial state $x_{0}$, and nominal controller $u_{nom}(.)$.\\
\textbf{Hyper-parameters:} Safety margin (threshold) $\theta>0$, recovery speed $\eta>0$.\\
\textbf{Initialize:} $u_{last}$ with $ u_{nom}(x_0)$, and $x^-_{0}$ with $x_0$-the initial state.
\begin{algorithmic}[1] 
\FOR{every $t>0,$}
    \STATE Receive the current state $x_t$ from observation, and compute $\phi(x_t)$. 
    \STATE Compute the time derivative of the state variable $\dot x_t^-$.
    \IF{$\phi(x_t)>\theta$:}
        \STATE Set $u_t$ using the nominal controller \eqref{eq: nominal controller}.
    \ELSE
        \STATE Obtain $U_t$, $V_t$,  $\hat{\lambda}_{i,t}$, $M_i,m_i$ for $i=1,\dots,k_t$, and construct the singular value matrix estimate $\hat{\Sigma}_t$ using \eqref{eq: pseudoinverse of estimated singular value matrix}.
        \STATE Compute its Moore-Penrose pseudo-inverse $\hat{\Sigma}_t^+$ using \eqref{eq: pseudoinverse of estimated singular value matrix}.
        \STATE Compute the matrix $\hat{g}^+(x_t)$ using \eqref{eq: g inverse estimate}.
        \STATE Compute the parameters $\alpha_t$ and $\beta_{i,t}$ using \eqref{eq: alpha} and \eqref{eq: betas} respectively.
        \STATE Choose any matrix $\Gamma_t$ which is a feasible solution to \eqref{eq: linear inequalities defining Gamma}.
        \STATE Compute $u_{corr}$ using \eqref{eq: correction control} and assign $u_t \leftarrow u_{corr}$.
    \ENDIF
    \STATE Play $u_t$ and store $u_{last} \leftarrow u_t$, $x^-_t \leftarrow x_t$.
\ENDFOR
\end{algorithmic}
\end{algorithm}
The threshold input $\theta$ selects a subset $\mathcal{S}_{\theta}$ of the original safe set $\mathcal{S}$. So for applications which require bounded actions, depending on the bounds, the value of $\theta$ can be set to choose a conservative safety-subset so that a clipped version of the correcting action $u_{corr}$ given by~\eqref{eq: correction control} would suffice to ensure forward invariance of the original safe-set $\mathcal{S}$ in practice. 
Further, lemma~\ref{claim: main theorem claim} in Appendix~\ref{sec: section 3 proofs} shows that when $\phi(x_t) < \theta$, $\dot \phi^+ \geq \eta$. So, the parameter $\eta$ controls the recovery rate but again $\eta$ must be chosen in compliance with the physical limitations imposed by the actual controller on the magnitude of $u_{corr}$. 
One should choose these parameters to balance the trade-offs between safety margins, recovery rates, and the actual physical limitations. Algorithm~\ref{alg: main algorithm} ensures safety at all times when the state originates inside the safe set and quick recovery otherwise. These properties are formally stated below.
\begin{theorem}[Forward Persistence of algorithm~\ref{alg: main algorithm}]\label{thm: main result}
  When algorithm~\ref{alg: main algorithm} is used for the dynamical system \eqref{eq: sys_dynamics} with a known continuously differentiable barrier function $\phi$ characterizing the safety set~\eqref{eq: CBF characterization of Safe region.} with $\nabla \phi \neq 0,~\forall x \notin \mathcal{S}$, and if assumptions \ref{Assumption: SVD} and \ref{Assumption: Singular value bounds} hold, then the following are true for any value of $\theta>0$:
  \begin{enumerate}[topsep=0pt,itemsep=0pt,partopsep=0pt, parsep=\insp]
      \item When $x_0 \in \mathcal{S}_{\theta}$, the set $\mathcal{S}_{\theta}$ is forward invariant with respect to the closed loop system.
      \item If $x_0 \notin \mathcal{S}_{\theta}$, then the set $\mathcal{S}_{\theta}$ is forward convergent with respect to the closed loop system.
  \end{enumerate}
\end{theorem}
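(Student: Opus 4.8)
The plan is to reduce everything to one key estimate — precisely the content of the already-stated Lemma~\ref{claim: main theorem claim} — namely that whenever Algorithm~\ref{alg: main algorithm} is in its correction branch, i.e. whenever $\phi(x_t)<\theta$, the closed-loop right-derivative of the barrier satisfies $\dot\phi^+(t)\ge\eta$. I would establish this first. Writing $\dot x_t^-=f(x_t)+g(x_t)u_{last}$ and $\dot x_t^+=f(x_t)+g(x_t)u_{corr}$, the correction rule \eqref{eq: correction control} gives $\dot x_t^+-\dot x_t^-=-g(x_t)\hat g^+(x_t)\Gamma_t\dot x_t^-$; substituting \eqref{eq: SVD of g(.)} and \eqref{eq: g inverse estimate} and using $V_t^\intercal V_t=I$ yields $g(x_t)\hat g^+(x_t)=U_tD_tU_t^\intercal$ with $D_t=\mathrm{diag}(\rho_{1,t},\dots,\rho_{d,t})$ and $\rho_{i,t}:=\lambda_{i,t}/\hat\lambda_{i,t}\in[m_i,M_i]$ by Assumption~\ref{Assumption: Singular value bounds}. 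Expanding $\nabla\phi(x_t)=\sum_i\beta_{i,t}U_{i,t}$ in the orthonormal basis $\{U_{i,t}\}$, taking the inner product with $\dot x_t^+$, and using the identity $\langle\nabla\phi(x_t),\dot x_t^-\rangle=\alpha_t\|\nabla\phi(x_t)\|^2+\eta$ that defines $\alpha_t$ in \eqref{eq: alpha}, one obtains
\begin{equation*}
  \dot\phi^+(t)=\eta+\sum_{i=1}^d\beta_{i,t}\Bigl(\alpha_t\beta_{i,t}-\rho_{i,t}\,\langle U_{i,t},\Gamma_t\dot x_t^-\rangle\Bigr).
\end{equation*}
It then suffices that each summand be nonnegative, which is exactly why \eqref{eq: linear inequalities defining Gamma} is split into four sign cases: in each sign regime of $(\alpha_t,\beta_{i,t})$ one checks that the worst admissible $\rho_{i,t}\in[m_i,M_i]$ is dominated by the endpoint ($M_i$ or $m_i$) appearing in the corresponding constraint, so that that constraint on $\langle U_{i,t},\Gamma_t\dot x_t^-\rangle$ forces $\beta_{i,t}(\alpha_t\beta_{i,t}-\rho_{i,t}\langle U_{i,t},\Gamma_t\dot x_t^-\rangle)\ge 0$ (the case $\beta_{i,t}=0$ being trivial); feasibility of \eqref{eq: linear inequalities defining Gamma} is immediate, being $d$ linear constraints on the $d^2$ free entries of $\Gamma_t$. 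This gives $\dot\phi^+(t)\ge\eta>0$ on $\{\phi<\theta\}$.

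Both parts of the theorem then follow from a Nagumo-type argument phrased with the right-derivative, together with the elementary comparison fact that a continuous function whose right derivative is $\ge\eta$ throughout an interval grows on that interval at rate at least $\eta$. For the first claim, set $h(t):=\phi(x_t)$, continuous since $x_t$ is continuous and $\phi\in C^1$. Suppose $x_0\in\mathcal{S}_\theta$ but $h(t^\ast)<\theta$ for some $t^\ast>0$, and let $t_0:=\sup\{t\in[0,t^\ast]:h(t)\ge\theta\}$; by continuity and the intermediate value theorem $h(t_0)=\theta$, $t_0<t^\ast$, and $h<\theta$ on $(t_0,t^\ast]$. There the correction branch is active, so $\dot\phi^+\ge\eta$; applying the comparison fact on closed subintervals of $(t_0,t^\ast]$ and letting the left endpoint tend to $t_0$ gives $h(t^\ast)\ge h(t_0)+\eta(t^\ast-t_0)>\theta$, contradicting $h(t^\ast)<\theta$. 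Hence $h(t)\ge\theta$ for all $t\ge0$, i.e. $\mathcal{S}_\theta$ is forward invariant.

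For the second claim, suppose $x_0\notin\mathcal{S}_\theta$, so $h(0)<\theta$. As long as $h$ stays below $\theta$ the correction branch is active and $\dot\phi^+\ge\eta$, so the comparison fact gives $h(t)\ge h(0)+\eta t$ on every initial interval on which $h<\theta$; as this is incompatible with $h<\theta$ once $t>(\theta-h(0))/\eta$, the function $h$ must attain the value $\theta$ at some finite time $\tau\le(\theta-h(0))/\eta$, and by continuity $h(\tau)=\theta$, i.e. $x_\tau\in\mathcal{S}_\theta$. Applying the argument of the first claim from the initial time $\tau$ (it never used $t=0$ specifically) gives $x_t\in\mathcal{S}_\theta$ for all $t\ge\tau$, which is precisely forward convergence.

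I expect the main obstacle to be the first step — specifically the four-way sign analysis showing that the constraints \eqref{eq: linear inequalities defining Gamma} together with $m_i\le\rho_{i,t}\le M_i$ make every term of $\sum_i\beta_{i,t}(\alpha_t\beta_{i,t}-\rho_{i,t}\langle U_{i,t},\Gamma_t\dot x_t^-\rangle)$ nonnegative: one must identify the correct worst-case endpoint of $[m_i,M_i]$ in each of the four regimes and keep careful track of the inequality direction when dividing by $\rho_{i,t}$ (and of the fact that $\langle U_{i,t},\Gamma_t\dot x_t^-\rangle$ can have either sign). A secondary subtlety, which is where the relaxed Nagumo-style statement earns its keep, is that only $\dot\phi^+$ — not $\dot\phi$ — is controlled and $u_t$ is merely right-continuous, so the non-exit argument must be conducted entirely in terms of the right-derivative rather than by invoking classical invariance theorems.
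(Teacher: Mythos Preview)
Your proposal is correct and mirrors the paper's proof: both reduce everything to Lemma~\ref{claim: main theorem claim} via the same SVD computation and four-case sign analysis, then obtain forward invariance by a right-derivative Nagumo argument and forward convergence from the uniform rate bound $\dot\phi^+\ge\eta$ with the same time estimate $\tau\le(\theta-\phi(x_0))/\eta$. The only differences are packaging---the paper isolates the Nagumo step as a standalone Lemma~\ref{lem: Nagumo using RI} (proved via an auxiliary mean-value-type result, Lemma~\ref{lem: LMVT Consequence for RHD}, which is precisely your ``comparison fact'') and also includes a short boundedness lemma (Lemma~\ref{lem: Bounded u}) verifying that $u_t$ remains finite so the closed-loop dynamics are well defined, a technical point you may wish to add.
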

The proof of the above theorem can be found in Appendix~\ref{sec: section 3 proofs}.
An immediate consequence of Theorem \ref{thm: main result} is that algorithm~\ref{alg: main algorithm} ensures zero-violation safety by rendering the set $\mathcal{S}_{\theta}$ forward persistent with respect to the closed loop dynamics.

\section{Application to safe exploration for RL algorithms} \label{sec: RL applications}
In this section, we apply the proposed method to achieve zero-violation safety in training a policy gradient algorithm. 
We assume an unknown dynamical system \eqref{eq: sys_dynamics} satisfying assumptions~\ref{Assumption: SVD} and \ref{Assumption: Singular value bounds}.
 As in parts of section~\ref{sec: system description}, for this section, we use $n$ to index the discrete time steps of the reinforcement learning algorithms (the corresponding continuous time is $t = nT_s$, where $T_s$ is the sampling interval).
For the deterministic state dynamics given by \eqref{eq: sys_dynamics}, this corresponds to a discrete-time process where we play the state transition by adopting a first-order hold on the state variable i.e., $s_{n+1}= s_n + \dot x_{nT_S}^+ T_s.$


The integration of the proposed method is formally stated in algorithm~\ref{alg: REINFORCE with Safety}.  
Let 
\begin{equation}\label{eq: deterministic correction control}
    C(s,\dot s^-,a,u_{last})=    \begin{cases}
                            a & \text{if } \phi(s)>0\\
                            u_{corr} & \text{if } \phi(s)\leq 0. \\
                        \end{cases}
\end{equation}
 denote the control action overwritten by our technique, where $u_{corr}$ is the correction control action given by \eqref{eq: correction control} using $x_t = s_n$, and $\dot x_t^- = \dot s^- \triangleq \dot x_{(n-1)T_s}^-$.
 For algorithm~\ref{alg: REINFORCE with Safety}, the goal is to obtain the optimal stochastic policy, i.e.,
 \begin{equation}\label{eq: algorithm goal}
     w^*=\argmax_{w\in \mathcal{W}} J(w),
 \end{equation}
 such that the policy $\pi_{w^*}$ is a safe policy and the learning process occurs with a zero-violation of safety. The algorithm \textit{rolls-out} a trajectory $\ell_e=\{u_{-1},s_0,a_0,u_0,s_1,a_1,u_1,\ldots\}$ for every episode $e$ using the current stochastic policy $\pi_{w_e}$ and collects the rewards $r(s_0,u_0),r(s_1,u_1),\ldots$. 
 The trajectory roll-out continues until the system reaches a target set of states $TG$ which could be modelled as a target sub-region of the state space, $TG \subset \mathcal{X}$. Using the current policy, the log probabilities of each step in the trajectory are calculated. And the gradient $\nabla_{w_e}J(w_e)$ at each episode $e$ is estimated as:
 \begin{align} \label{eq: gradient sampling}
     \widehat{\nabla J(w_e)}  &=\left[\sum_{n=0}^\infty \nabla_{w_e}  \ln \pi_{w_e}(a_n|s_n) \right]R(\ell_e),
\end{align}
     where 
     $R(\ell_e) = \sum_{n=0}^\infty \gamma^n r(s_n,u_n)$,
 for the trajectory $\ell_e=\{u_{-1},s_0,a_0,u_0,s_1,a_1,u_1,\ldots\}$ observed in the rollout of episode $e$. 
 Here $u_n=C(s_n,\dot s_n^-,a_n,u_{n-1})$ is the overwritten control action which is actually played during the trajectory roll-out. Finally, in a way similar to  the REINFORCE algorithm by~\cite{williams1992simple}, the policy is iteratively improved after every episode by updating the weights of the neural network using stochastic gradient ascent as:
\begin{equation}\label{eq: stochastic gradient ascent}
    w_{e+1}=w_{e}+\alpha_e \widehat{\nabla J(w_e)},
\end{equation}
where $\alpha_e$ is the step size for episode $e$.
Since the correction control given by~\eqref{eq: correction control} is used to overwrite the control action every time the state reaches the boundary of safety-subset, and since Theorem~\ref{thm: main result} holds for any choice of the nominal controller, the safety set $\mathcal{S}$ stays forward invariant throughout the learning process in the limit of $T_s \rightarrow 0$. This allows for safe exploration while learning a safe controller in our setting. Since the correction is a deterministic function of the control action $a_n$, the gradient estimate computed in algorithm~\ref{alg: REINFORCE with Safety} using equation~\eqref{eq: gradient sampling} turns out to be an unbiased estimate of the true policy gradient. This is formally stated as the theorem below and proved in Appendix~\ref{sec: section 4 proofs}.

\begin{theorem}[Policy-Gradient] \label{thm: Gradient Unbiased Estimate}
Let $\widehat{\nabla J(w)}$ be the policy gradient computed by  algorithm~\ref{alg: REINFORCE with Safety} using equation~\eqref{eq: gradient sampling} and let $\nabla_w J(w)$ be the true policy gradient. Then,
\begin{equation}\label{eq: gradient theorem equation.}
    \Ex\left[ \widehat{\nabla J(w)} \right]=\nabla_w J(w).
\end{equation}
\end{theorem}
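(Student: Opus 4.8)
The plan is to run the classical score-function (log-derivative) argument behind REINFORCE, after pinning down exactly where the randomness in a rollout lives. Fix the (possibly unknown) maps $f,g$; the law $p_w$ of a trajectory is taken only over the policy's internal sampling, so $f,g$ enter merely as fixed deterministic mappings (unknown to the designer, but deterministic nonetheless). The key observation is that the sole stochastic ingredient of a rollout $\ell_e=\{u_{-1},s_0,a_0,u_0,s_1,a_1,u_1,\dots\}$ is the sequence $a_n\sim\pi_{w}(\cdot\mid s_n)$: the played action $u_n=C(s_n,\dot s_n^-,a_n,u_{n-1})$ in \eqref{eq: deterministic correction control} is a deterministic function of $a_n$ and the history; the left-derivative term $\dot s_n^-$ is a deterministic function of past states and actions through \eqref{eq: sys_dynamics}; and the next state obeys the (deterministic) transition, e.g. the first-order hold $s_{n+1}=s_n+\dot x_{nT_s}^{+}T_s$, a deterministic function of $(s_n,u_n)$. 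Hence, inductively, $(s_n,u_n,\dot s_n^-)_{n}$ is a deterministic image of $(s_0,a_0,a_1,\dots)$, and the trajectory law factors as
\begin{equation*}
  p_w(\ell_e)=p(s_0)\prod_{n\ge 0}\pi_{w}(a_n\mid s_n)\;\times\;(\text{$w$-independent Dirac factors from the deterministic maps}).
\end{equation*}

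Next I would differentiate through the expectation. Writing $J(w)=\Ex_{p_w}[R(\ell_e)]$ with $R(\ell_e)=\sum_{n\ge0}\gamma^n r(s_n,u_n)$ a deterministic functional of $\ell_e$, and using $\nabla_w p_w=p_w\nabla_w\ln p_w$,
\begin{equation*}
  \nabla_w J(w)=\int \nabla_w p_w(\ell_e)\,R(\ell_e)\,d\ell_e=\Ex_{p_w}\!\big[R(\ell_e)\,\nabla_w\ln p_w(\ell_e)\big].
\end{equation*}
Since every factor of $p_w(\ell_e)$ except $\prod_{n}\pi_w(a_n\mid s_n)$ is free of $w$, we get $\nabla_w\ln p_w(\ell_e)=\sum_{n\ge0}\nabla_w\ln\pi_w(a_n\mid s_n)$, so
\begin{equation*}
  \nabla_w J(w)=\Ex_{p_w}\!\Big[\Big(\sum_{n\ge0}\nabla_w\ln\pi_w(a_n\mid s_n)\Big)R(\ell_e)\Big]=\Ex\big[\widehat{\nabla J(w)}\big],
\end{equation*}
where the right-hand side is precisely the estimator \eqref{eq: gradient sampling}; this is \eqref{eq: gradient theorem equation.}.

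The hard part is the analytic bookkeeping that makes the above display legitimate, not any new idea: one must justify interchanging $\nabla_w$ with the (possibly infinite-horizon) expectation and with the series over $n$. I would discharge this with the usual regularity assumptions --- $w\mapsto\pi_w(a\mid s)$ continuously differentiable with $\|\nabla_w\ln\pi_w(a\mid s)\|$ bounded by an integrable envelope, rewards bounded, and $\gamma\in[0,1)$ (equivalently, the rollout reaches $TG$ in finite time with integrable hitting time), so that $R(\ell_e)\nabla_w\ln p_w(\ell_e)$ admits an integrable dominating function uniformly near $w$; dominated convergence then licenses both interchanges. I would also flag one conceptual subtlety worth a sentence in the write-up: the score is evaluated at the \emph{nominal} sample $a_n$, not at the \emph{played} action $u_n$. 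This is exactly right because $C(\cdot,\cdot,a,\cdot)$ is a fixed deterministic map applied \emph{after} sampling --- we never integrate $u_n$ out, so no pushforward/Jacobian correction appears; attempting instead to track the density of $u_n$ would run into $C$ not being injective in $a$, and the argument above avoids that altogether.
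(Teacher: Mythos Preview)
Your proposal is correct and follows essentially the same route as the paper: both factor the trajectory density, observe that the correction $C$ and the transitions are deterministic and $w$-independent so that only the $\pi_w(a_n\mid s_n)$ factors carry a $w$-gradient, and then apply the score-function (log-derivative) identity to recover~\eqref{eq: gradient sampling} as an unbiased estimator. If anything, you are more careful than the paper, which simply invokes the Leibniz integral rule where you spell out the dominated-convergence justification and the $a_n$-vs-$u_n$ scoring subtlety.
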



\begin{algorithm}[tb]
\caption{REINFORCE with Safety}
\label{alg: REINFORCE with Safety}
\textbf{Input:} Action space $\mathcal{U}$,  target set $TG \subset \mathcal{X}$, barrier function $\phi(.)$, the correction controller $C(s, \dot s^-,u,u_{last})$ as defined in \eqref{eq: deterministic correction control}, the safe initial state $s_{0} \in \mathcal{S}$, the estimates $U_n,V_n, \hat{\Sigma}_n$ at each instant $n$, according to assumptions~\ref{Assumption: SVD} and~\ref{Assumption: Singular value bounds}, and initial policy parameters $w_0 \in \mathcal{W}$, $u_{-1}$ any arbitrary function.\\
\textbf{Hyperparameters: }Discounting Factor $\gamma$, discretization time: $T_s$, step size: $\alpha_e>0$ for episode $e$.\\
\textbf{Initialize:} $s^-_{0}\leftarrow s_0, w \leftarrow w_0$.
\begin{algorithmic}[1] 
\FOR{every episode $e=0,1,2,\ldots$}
     \STATE Start Episode $e$.
     \STATE Set $n \leftarrow 0$. 
    \WHILE{$s_n \notin TG$} 
        \STATE Compute the time derivative of state variable:
        $\dot s_n^-\leftarrow \frac{1}{T_s}(s_n-s^-_n).$  
        \STATE Sample $a_n \sim \pi_{w_e}(.|s_n)$.  
        \STATE Overwrite the sampled action $a_n$ and obtain $u_n = C(s_n, \dot s_n^-, a_n, u_{n-1})$ 
            using \eqref{eq: deterministic correction control}. 
        \STATE Play $u_n$ and collect reward $r_n \leftarrow r(x_n,u_n)$.   
        \STATE Receive the new state $s_{n + 1}$ from observation. 
        \STATE Assign: $s^-_{n+1} \leftarrow s_n$, $n \leftarrow n+1$.
    \ENDWHILE    
     \STATE For the current episode $e$:
     let $\ell_e=(s_0,a_0,u_0,s_1,a_1,u_1,s_2,\ldots)$ be the roll-out trajectory.
     \STATE Compute $R(\ell_e)$ for the trajectory $\ell_e$ using \eqref{eq: gradient sampling}.
     \STATE Estimate the policy gradient sample $\widehat{\nabla J(w_{e})}$ using \eqref{eq: gradient sampling}.
     \STATE Update the policy network parameters using \eqref{eq: stochastic gradient ascent}.
\ENDFOR
\end{algorithmic}
\end{algorithm}

\section{Numerical Study}\label{sec: simulations}
In this section, we demonstrate effectiveness of the proposed method using numerical simulations. The simulation code can be found with the supplementary materials. 
First, we tested algorithm~\ref{alg: main algorithm} in an unknown one-dimensional linear dynamical system
with unsafe (and unstable) nominal controller starting from a safe initial state near the boundary of the safe set $\mathcal{S}$. The performance is compared with Adaptive Safe Control Barrier function (aCBF) algorithm from~\cite{9147463}; Robust Adaptive Control Barrier Function (RaCBF) and RaCBF with Set Memebership Identification (RaCBF+SMID) (denoted here as RaCBFS) algorithms from~\cite{9129764}; the convex body chasing based algorithm $\mathcal{A}_\pi-$SEL algorithm from~\cite{https://doi.org/10.48550/arxiv.2103.11055} (denoted here as cbc); and the Bayesian Learning based Adaptive Control for Safety Critical Systems (BALSA) algorithm from~\cite{9196709} (denoted here as balsa). (See appendix~\ref{sec: adaptive control simulations} for implementation details.)
Algorithm~\ref{alg: main algorithm} acted safely at all times when the state originated from a point within the safe region (forward invariance), even before other algorithms could obtain sufficient information (samples) to exhibit safe behaviors (Figure~\ref{fig: Bar Adaptive Control Plot}). This demonstrates the safety merit~\ref{Mr: Safety} of our proposed algorithm~\ref{alg: main algorithm}. When the same algorithms were started from an unsafe state outside the safe set $\mathcal{S}$, we observe a quick recovery (Figure~\ref{fig: Bar Recovery Plot}). This demonstrates the recovery merit~\ref{Mr: Recovery Extension} of our proposed algorithm~\ref{alg: main algorithm}.

Next, we tested algorithm~\ref{alg: REINFORCE with Safety} for a non-linear four-dimensional extension of the vehicular yaw dynamics system (as described in Appendix~\ref{sec: RL simulations}) with continuous state and action spaces. We compared its performance and safety against some model-free reinforcement learning algorithms like REINFORCE from~\cite{williams1992simple}, and Constrained Policy Optimization (CPO) from~\cite{pmlr-v70-achiam17a}. The policy network in algorithm~\ref{alg: REINFORCE with Safety} predicted a random action, which was then clipped within reasonable limits to obtain $a_n$ at each instance $n$.  (See appendix~\ref{sec: RL simulations} for implementation details.) We observe from figure \ref{fig: Model-Free Safety Rate}, that the average fraction of time the system was unsafe during a training epoch improves as we go from the naive REINFORCE algorithm, to the CPO algorithm but it fails to achieve zero-violation safety, which is in contrast to our algorithm~\ref{alg: REINFORCE with Safety} that achieves zero-violation safety as well.  Further from \ref{fig: Model-Free RL Convergence}, we observe that the number of steps needed to accomplish the task (after convergence) and the convergence rate of our algorithm~\ref{alg: REINFORCE with Safety} is comparable with the other two algorithms, thus showing that algorithm~\ref{alg: REINFORCE with Safety} does not suffer any compromise in its performance or convergence rate. Our tool therefore enables the REINFORCE like algorithm to learn a safe policy via a completely safe exploration process. This demonstrates the applicability to RL merit~\ref{Mr: Model-Free RL Tool} of our technique.
\begin{figure*}
     \centering
     \begin{subfigure}[b]{0.23\textwidth}
         \centering
         \includegraphics[width=\textwidth]{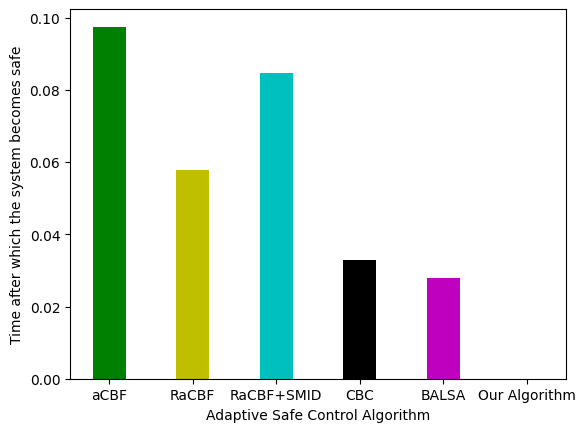}
         \caption{}
  \label{fig: Bar Adaptive Control Plot}
     \end{subfigure}
     \begin{subfigure}[b]{0.23\textwidth}
         \centering
         \includegraphics[width=\textwidth]{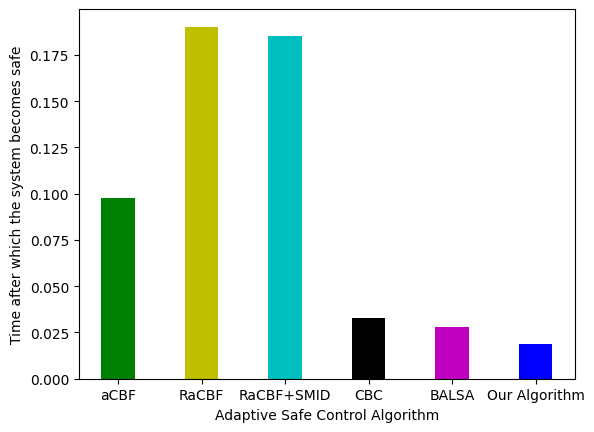}
         \caption{}
  \label{fig: Bar Recovery Plot}
     \end{subfigure}
          \begin{subfigure}[b]{0.23\textwidth}
         \centering
         \includegraphics[width=\textwidth]{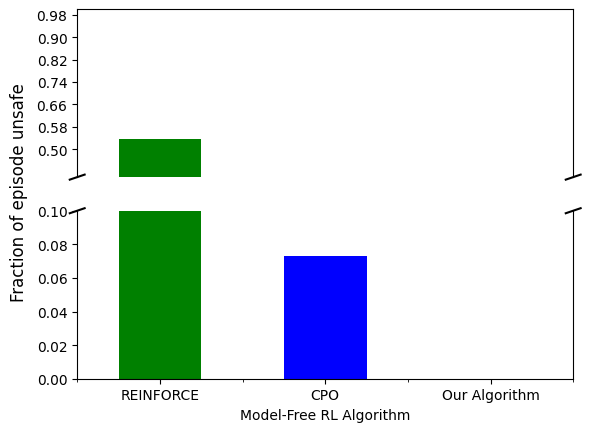}
         \caption{}
  \label{fig: Model-Free Safety Rate}
     \end{subfigure}
          \begin{subfigure}[b]{0.23\textwidth}
         \centering
         \includegraphics[width=\textwidth]{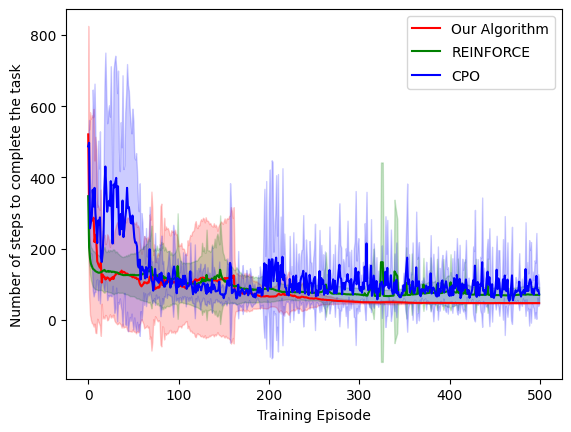}
         \caption{}
    \label{fig: Model-Free RL Convergence}
     \end{subfigure}
\caption{Comparing (a) forward invariance and (b) forward convergence of our algorithm~\ref{alg: main algorithm} with several safe adaptive control algorithms. (c) Comparing (c) safety rate and (d) convergence rate of  algorithm~\ref{alg: REINFORCE with Safety} with other model-free RL algorithms.} 
        \label{fig: Numerical study}
\end{figure*}

\section{Conclusion and Future Work}\label{sec: Paper Conclusion}
We proposed a sample-optimal technique that ensures zero-violation safety at all times for systems with large uncertainties in the system dynamics and demonstrated its applicability to safe exploration in reinforcement learning problems.
Possible future works include extensions to discrete-time systems; non-deterministic systems with an additive noise; integration into other reinforcement learning algorithms such as~\cite{https://doi.org/10.48550/arxiv.1502.05477,https://doi.org/10.48550/arxiv.1707.06347,https://doi.org/10.48550/arxiv.1509.02971}.

\section*{APPENDIX}
\section{Proof of Theorem~\ref{thm: main result}}\label{sec: section 3 proofs}



We have a few preliminary lemmas, before starting the proof of theorem~\ref{thm: main result}. We begin with the following lemma which is a extension of the well-known Nagumo's theorem~\cite{Nagumo1942berDL} to the settings when the dynamics $x_t$ does not need to be differentiable w.r.t. $t$, but only requires the existence of left and right hand derivatives w.r.t. $t$, and we present an independent elementary proof of the lemma below.
\begin{lemma}\label{lem: Nagumo using RI}
Let $h:\br^d \rightarrow \br$ be a differetiable function, and let $x_{t}:\br_{\geq 0}\rightarrow \br^d$ denote the state-variable of a dynamical system which is continuous with respect to time $t$ and such that $\dot x^+_t$ exists $\forall t$. Let $\mathcal{C}\triangleq\left\{ x:h(x)\geq 0  \right\},$ and let $\partial \mathcal{C}\triangleq\left\{ x: h(x)=0\right\}$ be the boundary of the set $\mathcal{C}$, $int(\mathcal{C})\triangleq\mathcal{C}\setminus\partial \mathcal{C}$ be the interior of $\mathcal{C}$, and let $T>0$ be any finite time-horizon. If the following conditions hold:
\begin{description}
    \item \textit{S-1:} $x_0 \in int(\mathcal{C})$.
    \item \textit{S-2:} $x_t \notin int(\mathcal{C}) \implies \dot h^+(x_t) > 0$.
\end{description}
Then
\begin{equation}\label{eq: Nagumo's implication}
    x_t \in \mathcal{C},~~\forall t \in [0,T].
\end{equation}

\end{lemma}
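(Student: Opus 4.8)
\textbf{Proof plan for Lemma~\ref{lem: Nagumo using RI}.}

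The plan is to argue by contradiction using a supremum-of-exit-time argument, exploiting the continuity of $t \mapsto h(x_t)$ together with the right-hand derivative hypothesis. Suppose \eqref{eq: Nagumo's implication} fails, so there is some $t \in [0,T]$ with $h(x_t) < 0$. Define $\tau \triangleq \sup\{ s \in [0,T] : x_r \in \mathcal{C} \text{ for all } r \in [0,s]\}$, i.e.\ the last time up to which the trajectory has stayed in $\mathcal{C}$. By \textit{S-1} and continuity of $h(x_t)$ in $t$, we have $\tau > 0$ (the trajectory starts strictly inside $\mathcal{C}$, so it cannot leave immediately), and by the failure assumption $\tau < T$ or at least $\tau$ is a genuine boundary crossing. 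Continuity of $h\circ x$ forces $h(x_\tau) = 0$, hence $x_\tau \in \partial \mathcal{C}$, so $x_\tau \notin int(\mathcal{C})$.

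Next I would apply hypothesis \textit{S-2} at time $\tau$: since $x_\tau \notin int(\mathcal{C})$, we get $\dot h^+(x_\tau) > 0$. By definition of the right-hand derivative,
\begin{equation*}
    \dot h^+(x_\tau) = \lim_{\delta \to 0^+} \frac{h(x_{\tau+\delta}) - h(x_\tau)}{\delta} = \lim_{\delta \to 0^+} \frac{h(x_{\tau+\delta})}{\delta} > 0,
\end{equation*}
using $h(x_\tau) = 0$. Therefore there exists $\epsilon > 0$ such that $h(x_{\tau+\delta}) > 0$ for all $\delta \in (0,\epsilon)$ (with $\tau + \delta \le T$). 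This says the trajectory re-enters (or stays in) $\mathcal{C}$ strictly, for a short time interval just after $\tau$. Combined with $x_\tau \in \mathcal{C}$, we conclude $x_r \in \mathcal{C}$ for all $r \in [0, \tau + \epsilon']$ for some small $\epsilon' > 0$, contradicting the maximality of $\tau$ (in the case $\tau < T$); and if $\tau = T$ the failure assumption is already contradicted since it would force $x_t \in \mathcal{C}$ on all of $[0,T]$.

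The subtle point — and the main obstacle — is handling the case where the trajectory may touch $\partial\mathcal{C}$ and leave on a set of times with intricate structure, so that $\tau$ as defined need not coincide with the infimum of the "bad" set and one must be careful that $h(x_\tau) = 0$ rather than $h(x_\tau) < 0$. This is resolved cleanly by the continuity of $h \circ x$: if $h(x_\tau)$ were strictly negative, continuity would give $h(x_s) < 0$ for $s$ slightly less than $\tau$, contradicting that $[0,\tau)$ lies in $\mathcal{C}$ up to the sup; and $h(x_\tau) > 0$ is impossible because then by continuity $x$ would stay in $\mathcal{C}$ past $\tau$, again contradicting the definition (here using closedness of $\mathcal{C}$). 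Hence $h(x_\tau)=0$ is forced, and one only needs \textit{S-2} at this single boundary instant together with the one-sided derivative estimate above. Note we never need $x_t$ to be differentiable in $t$; only continuity of the trajectory and existence of $\dot h^+$ are used, which is exactly the relaxation claimed.
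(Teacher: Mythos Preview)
Your argument is correct, and it is genuinely different from (and more elementary than) the paper's proof. The paper proceeds by first establishing an auxiliary mean-value-type lemma for right-hand derivatives: if $h$ is continuous with $\dot h^+$ defined everywhere and $h(t_1) > h(t_2)$, then $\dot h^+(t_3) \le 0$ for some $t_3 \in (t_1,t_2)$; this auxiliary lemma is proved by a nested-bisection construction. The paper then takes an arbitrary bad time $\tau$ with $h(x_\tau) < 0$, uses the intermediate value theorem to locate the last zero $t^*$ of $h(x_\cdot)$ before $\tau$, invokes \textit{S-2} on the entire interval $(t^*,\tau)$ to force $\dot h^+ > 0$ there, and applies the auxiliary lemma on $[t^*,\tau]$ to obtain a contradiction. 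By contrast, your first-exit-time argument pins down the boundary instant $\tau$ directly via a supremum, applies \textit{S-2} only at that single point, and reads off the contradiction from the very definition of the right-hand derivative --- no mean-value lemma or bisection is needed. Your route is shorter and uses strictly less machinery; the paper's route, on the other hand, isolates a reusable one-sided mean-value inequality that may be of independent interest in other parts of their analysis.
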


We need a couple of lemmas to prove lemma~\ref{lem: Nagumo using RI}.
 We begin with the following lemma which is a straightforward extension of the chain rule for differentiation to right hand derivatives.

\begin{lemma}\label{lem: chain rule for right hand derivatives}
If $\phi:\br^d\rightarrow \br$ is differentiable with respect to $x$, $\forall x \in \br^d$, if $x_t$ is continuous with respect to time $t$ such that $\dot x^+_t$ exists at all times $t$, then
\begin{equation}\label{eq: dot phi positive}
    \dot \phi^+(x_t)=\nabla \phi(x_t) \cdot \dot x^+_t.
\end{equation}
\end{lemma}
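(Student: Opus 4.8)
The plan is to prove the chain rule for right-hand derivatives directly from the definition of $\dot\phi^+$, combining differentiability of $\phi$ with the existence of $\dot x^+_t$. Recall $\dot\phi^+(x_t) = \lim_{\delta t\to 0^+}\frac{\phi(x_{t+\delta t})-\phi(x_t)}{\delta t}$. First I would write the first-order Taylor expansion of $\phi$ about $x_t$: since $\phi$ is differentiable at $x_t$, for any increment $h\in\br^d$ we have $\phi(x_t+h) = \phi(x_t) + \nabla\phi(x_t)\cdot h + r(h)$, where the remainder satisfies $r(h)/\|h\| \to 0$ as $h\to 0$. I then apply this with $h = x_{t+\delta t}-x_t$, which tends to $0$ as $\delta t\to 0^+$ because $x_t$ is continuous in $t$.

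Next I would divide by $\delta t$ and take the limit term by term. The linear term gives $\nabla\phi(x_t)\cdot \frac{x_{t+\delta t}-x_t}{\delta t} \to \nabla\phi(x_t)\cdot\dot x^+_t$ by definition of the right-hand derivative of $x_t$ and continuity of the inner product. For the remainder term, I would write
\begin{equation}
\frac{r(x_{t+\delta t}-x_t)}{\delta t} = \frac{r(x_{t+\delta t}-x_t)}{\|x_{t+\delta t}-x_t\|}\cdot\frac{\|x_{t+\delta t}-x_t\|}{\delta t},
\end{equation}
(handling separately the trivial case where the increment is zero along a subsequence). The first factor tends to $0$ since $x_{t+\delta t}-x_t\to 0$; the second factor stays bounded because $\frac{x_{t+\delta t}-x_t}{\delta t}$ converges to $\dot x^+_t$ and hence its norm is bounded for small $\delta t$. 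Therefore the product tends to $0$, and assembling the two limits yields $\dot\phi^+(x_t) = \nabla\phi(x_t)\cdot\dot x^+_t$.

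The only mild subtlety — and the main thing to be careful about rather than a genuine obstacle — is the bookkeeping around $x_{t+\delta t}=x_t$: on any such sequence of $\delta t$ the difference quotient for $\phi$ is exactly $0$ and $\dot x^+_t$ contributes $0$ along that sequence, so the identity still holds. Everything else is a routine $\varepsilon$–$\delta$ argument; I would present it compactly, noting that the existence of $\dot x^+_t$ is exactly what licenses the boundedness of $\|x_{t+\delta t}-x_t\|/\delta t$ near $\delta t = 0^+$. Note the statement uses $\phi$ whereas Lemma~\ref{lem: Nagumo using RI} is phrased with $h$; the chain-rule lemma is stated for a generic differentiable $\phi$ and will be applied with $\phi$ replaced by $h$ (and later by the barrier function) in the proof of Lemma~\ref{lem: Nagumo using RI}.
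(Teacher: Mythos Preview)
Your argument is correct and complete: the little-$o$ remainder formulation of differentiability, combined with boundedness of $\|x_{t+\delta t}-x_t\|/\delta t$ coming from the existence of $\dot x^+_t$, is exactly what is needed, and you handle the degenerate case $x_{t+\delta t}=x_t$ properly.

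The paper's own proof takes a slightly different route. Rather than working with the Taylor remainder $r(h)$, it introduces an auxiliary function $Q(y,x_t)$ intended to equal a componentwise difference quotient for $y\neq x_t$ and to equal $\nabla\phi(x_t)$ at $y=x_t$, then argues that $Q(x_\tau,x_t)$ is continuous in $\tau$ and passes to the limit in the product $Q(x_\tau,x_t)\cdot\frac{x_\tau-x_t}{\tau-t}$. Conceptually this is the same idea (replace the secant by the gradient in the limit), but your remainder-based argument is the more standard presentation and avoids the bookkeeping of defining $Q$ coordinatewise; in particular, your version makes transparent \emph{why} only differentiability at the single point $x_t$ (rather than any stronger regularity) is required, and it sidesteps the need to justify continuity of a coordinatewise secant construction for a scalar-valued $\phi$. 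Either approach yields the lemma with the same hypotheses.
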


\begin{proof}[Proof (Lemma \ref{lem: chain rule for right hand derivatives})]
Since $\phi(x)$ is differentiable with respect to $x$, we define $Q(y,x_t)$ for any $y \in \br^d$ as:
\begin{equation*}
    Q(y,x_t)=\begin{cases}
                \begin{pmatrix}
                    \frac{\phi(y)_1-\phi(x_t)_1}{y_1-{x_t}_{1}} \\
                    \vdots\\
                    \frac{\phi(y)_i-\phi(x_t)_i}{y_i-{x_t}_{i}} \\
                    \vdots\\
                    \frac{\phi(y)_d-\phi(x_t)_d}{y_d-{x_t}_{d}} \\
                \end{pmatrix}
                \text{ if } y\neq x_t,\\
                \nabla \phi(x_t) \text{ otherwise.}
             \end{cases}
\end{equation*}
We observe that $Q(y,x_t)$ is continuous with respect to $y$ due to the differentiability of $\phi$ at $x_t$. And since $x_\tau$ is continuous with respect to $\tau$, so $Q(x_\tau,x_t)$ is also continuous with respect to $\tau$. Now let $\tau=t+\delta t$, for some $\delta t>0$. Then, we have:
\begin{align*}
    \dot \phi^+(x_t)&=\lim_{\delta t \rightarrow 0^+}
    \frac{\phi(x_{t+\delta t})-\phi(x_t)}{\delta t}\\
    &=\lim_{\tau \rightarrow t^+} Q(x_{\tau},x_t)\cdot \frac{1}{\tau-t}(x_\tau-x_t)\\
    &=\lim_{\tau \rightarrow t^+}Q(x_{\tau},x_t)\cdot\lim_{\tau \rightarrow t^+}\frac{1}{\tau-t}(x_\tau-x_t)\\
    &=Q(x_t,x_t)\cdot \dot x^+_t\\
    &=\nabla \phi(x_t)\cdot \dot x^+_t,
\end{align*}
where the first line uses the definition of right hand derivatives, the second line follows from the definition of $Q(x_\tau,x_t)$, the third and fourth line use the continuity of $Q(x_\tau,x_t)$ with respect to $\tau$, the existence of $\dot x^+_t$, and the fact that the limit of a product of two functions is the product of the limits of the two functions when the limits exist, and the last line again follows from the definition of $Q(y,x_t)$.
\end{proof}
Similarly, we can also prove the following:
\begin{lemma}\label{lem: chain rule for left hand derivatives}
    If $\phi:\br^d\rightarrow \br$ is differentiable with respect to $x, \forall x \in \br^d$, if $x_t$ is continuous with respect to time $t$ such that $\dot x^-_t$ exists at all times $t$, then
\begin{equation}\label{eq: dot phi negative}
    \dot \phi^-(x_t)=\nabla \phi(x_t) \cdot \dot x^-_t.
\end{equation}
\end{lemma}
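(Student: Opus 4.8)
The plan is to mirror, almost verbatim, the argument used for Lemma~\ref{lem: chain rule for right hand derivatives}, replacing the forward increment $t+\delta t$ by the backward increment $t-\delta t$ and the right-hand difference quotient by the left-hand one. First I would reuse the auxiliary map $Q(y,x_t)$ introduced there: it equals the coordinatewise difference quotient of $\phi$ between $y$ and $x_t$ when $y\neq x_t$ and equals $\nabla\phi(x_t)$ otherwise, and it is continuous in $y$ at $y=x_t$ precisely because $\phi$ is differentiable at $x_t$. Composing with the continuous curve $\tau\mapsto x_\tau$ then gives a map $\tau\mapsto Q(x_\tau,x_t)$ that is continuous at $\tau=t$.

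Next I would set $\tau=t-\delta t$ with $\delta t>0$ and rewrite the left-hand derivative as
\[
\dot\phi^-(x_t)=\lim_{\delta t\to 0^+}\frac{\phi(x_t)-\phi(x_{t-\delta t})}{\delta t}=\lim_{\tau\to t^-}Q(x_\tau,x_t)\cdot\frac{1}{t-\tau}(x_t-x_\tau).
\]
Since $\dot x^-_t$ exists by hypothesis, the second factor converges to $\dot x^-_t$ as $\tau\to t^-$, while the first factor converges to $Q(x_t,x_t)=\nabla\phi(x_t)$ by the continuity just established. Applying the rule that the limit of a product equals the product of the limits whenever both exist yields $\dot\phi^-(x_t)=\nabla\phi(x_t)\cdot\dot x^-_t$, which is~\eqref{eq: dot phi negative}.

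I expect no genuine obstacle here; the only point requiring a little care is the sign bookkeeping in the backward difference quotient, namely writing the increment as $\frac{1}{t-\tau}(x_t-x_\tau)$ rather than $\frac{1}{\tau-t}(x_\tau-x_t)$, so that the limit is exactly $\dot x^-_t$ and not its negative. Alternatively one could bypass $Q$ and argue directly from the first-order expansion $\phi(x_t)-\phi(x_{t-\delta t})=\nabla\phi(x_t)\cdot(x_t-x_{t-\delta t})+o(\norm{x_t-x_{t-\delta t}})$, dividing by $\delta t$ and noting that $\norm{x_t-x_{t-\delta t}}/\delta t\to\norm{\dot x^-_t}$ stays bounded so the remainder vanishes; but following the $Q$-based template keeps the exposition parallel to the preceding lemma.
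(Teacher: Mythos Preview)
Your proposal is correct and is precisely the argument the paper intends: the paper does not spell out a proof of this lemma but simply writes ``Similarly, we can also prove the following,'' referring back to the $Q$-based proof of Lemma~\ref{lem: chain rule for right hand derivatives}. Your mirrored computation with $\tau=t-\delta t$ and the sign-corrected difference quotient $\frac{1}{t-\tau}(x_t-x_\tau)$ is exactly that analogous argument.
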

The next lemma is another result required in the proof of lemma~\ref{lem: Nagumo using RI}. It resembles a direct consequence of the Lagrange's mean value theorem for differentiable functions if the function were differentiable.

 \begin{lemma}\label{lem: LMVT Consequence for RHD}
     Let $h(t)$ be a continuous function of $t$ with its right-hand derivative $\dot h^+$ defined for every $t$. Let $t_1<t_2$ be such that $h(t_1)>h(t_2)$, then there exists a $t_3 \in (t_1,t_2)$ such that $\dot h^+(t_3) \leq 0$. 
 \end{lemma}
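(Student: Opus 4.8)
\textbf{Proof plan for Lemma~\ref{lem: LMVT Consequence for RHD}.}
The plan is to argue by contradiction: suppose that $\dot h^+(t) > 0$ for every $t \in (t_1,t_2)$, and derive a contradiction with the hypothesis $h(t_1) > h(t_2)$. The intuition is that a continuous function whose right-hand derivative is strictly positive everywhere on an interval cannot decrease across that interval, but making this rigorous without assuming differentiability requires a little care, since the classical mean value theorem is unavailable. The natural tool is a ``creeping lemma'' / supremum argument of the kind used to prove monotonicity from one-sided derivative sign conditions.

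Concretely, I would fix an arbitrary $\varepsilon > 0$ and consider the set
\[
    A_\varepsilon \triangleq \{\, s \in [t_1,t_2] : h(\sigma) \geq h(t_1) - \varepsilon(\sigma - t_1) \text{ for all } \sigma \in [t_1,s] \,\}.
\]
First I would note $t_1 \in A_\varepsilon$, so $A_\varepsilon$ is nonempty, and it is bounded above by $t_2$; let $s^\ast = \sup A_\varepsilon$. By continuity of $h$ the defining inequality passes to the limit, so $s^\ast \in A_\varepsilon$, i.e. $h(\sigma) \geq h(t_1) - \varepsilon(\sigma-t_1)$ for all $\sigma \in [t_1,s^\ast]$. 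Next I would show $s^\ast = t_2$: if $s^\ast < t_2$, then since $\dot h^+(s^\ast) > 0 > -\varepsilon$, the definition of the right-hand derivative gives a $\delta > 0$ with $h(s^\ast + u) > h(s^\ast) - \varepsilon u \geq h(t_1) - \varepsilon(s^\ast - t_1) - \varepsilon u = h(t_1) - \varepsilon(s^\ast + u - t_1)$ for all $u \in (0,\delta)$, which would place points to the right of $s^\ast$ in $A_\varepsilon$, contradicting the supremum. Hence $s^\ast = t_2$, so $h(t_2) \geq h(t_1) - \varepsilon(t_2 - t_1)$. Since $\varepsilon > 0$ was arbitrary, letting $\varepsilon \to 0^+$ yields $h(t_2) \geq h(t_1)$, contradicting $h(t_1) > h(t_2)$. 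Therefore there must exist some $t_3 \in (t_1,t_2)$ with $\dot h^+(t_3) \leq 0$.

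The main obstacle I anticipate is purely technical rather than conceptual: correctly handling the supremum argument so that the ``creeping'' step is airtight, in particular verifying that $s^\ast$ itself lies in $A_\varepsilon$ (closedness of $A_\varepsilon$, which follows from continuity of $h$) and that strict positivity of $\dot h^+(s^\ast)$ really does push admissible points past $s^\ast$. One must also be slightly careful that the claim produced is about the open interval $(t_1,t_2)$ — but this is automatic, since the negation assumed $\dot h^+ > 0$ on all of $(t_1,t_2)$, and $s^\ast = t_2$ is never invoked as a point where we need the derivative sign. An alternative, essentially equivalent route would be to apply Lemma~\ref{lem: chain rule for right hand derivatives}-style reasoning is unnecessary here; the self-contained $\varepsilon$-argument above is the cleanest, and I would present it in that form.
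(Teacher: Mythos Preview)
Your creeping/supremum argument is correct and takes a genuinely different route from the paper's proof. The paper argues by iterated subdivision: it partitions $[t_1,t_2]$ into $k$ equal pieces, selects a subinterval on which $h$ strictly decreases, records the right-hand derivative at that subinterval's left endpoint, and repeats, producing a nested sequence of shrinking intervals; it then asserts that the recorded right-hand derivatives converge to $\dot h^+(t_3)$ at the limit point $t_3$ and that this limit is $\le 0$. Your $\varepsilon$-supremum argument is the more standard analytic device for extracting monotonicity from a one-sided derivative sign, and it is cleaner in that it never implicitly leans on any continuity of $\dot h^+$ (which is not assumed) --- a point the paper's convergence step glosses over.

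One small patch you should make: your extension step invokes $\dot h^+(s^\ast) > 0$ whenever $s^\ast < t_2$, but the contradiction hypothesis only covers $s^\ast \in (t_1,t_2)$, not $s^\ast = t_1$. If $\dot h^+(t_1) \le -\varepsilon$ it is possible that $A_\varepsilon = \{t_1\}$ and the creep stalls at the left endpoint. The fix is immediate: by continuity of $h$ and $h(t_1) > h(t_2)$, pick $t_1' \in (t_1,t_2)$ with $h(t_1') > h(t_2)$ and run your argument on $[t_1',t_2]$; then every candidate $s^\ast \in [t_1',t_2)$ lies in $(t_1,t_2)$ and your extension step goes through verbatim. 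You correctly flagged the analogous care needed at $t_2$; this is the symmetric issue at $t_1$.
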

 \begin{proof}
     For contradiction, assume 
     \begin{equation}\label{eq:RHD contradiction}
         \dot h^+(\tau)>0,~~\forall \tau \in (t_1,t_2).
     \end{equation}
    We partition the interval $[t_1,t_2]$ into $k$ equal intervals:
    $[t_{i,1},t_{i,2}]$ for $i=1,2,\ldots,k$, such that $t_{1,1}=t_1$, $t_{k,2}=t_2$, and $t_{i+1,1}=t_{i,2}$, and $t_{i,1}+t_{i+1,2}=2t_{i,2}$, $\forall i \in  \{1,2,\ldots, k-1\}$. Thus, we have
    $$[t_1,t_2]=\bigcup_{i=1}^{k} [t_{i,1},t_{i,2}],~~\text{where }[t_{i,1},t_{i,2}] \cap [t_{j,1},t_{j,2}]=\emptyset~~\forall i\neq j.$$ 
    Since $h(t_1)>h(t_2)$, there exists $i \in \{1,2,\ldots,k\}$ such that $h(t_{i,1})>h(t_{i,2})$, otherwise $h(t_1)=h(t_{1,1})\leq h(t_{1,2})=h(t_{2,1})\leq h(t_{2,2})=\ldots=h(t_{k,2})=h(t_2)$, a contradiction. For this $i$, we define $f_1=\dot h^+(t_{i,1})$. By our contrary assumption \eqref{eq:RHD contradiction}, $f_1>0$.
    Next we partition this interval $[t_{i,1},t_{i,2}]$ again into $k$ equal parts exactly as above (note that we again have $h(t_{i,1})>h(t_{i,2})$) to get the number $f_2$. We keep repeating the process to generate the sequence: $f_1,f_2,\dots$. We observe that each number $f_j$ in the above sequence is positive as argued above, and since the function $h(t)$ is right-differentiable everywhere in $(t_1,t_2)$ and the sequence is constructed by sub-partitioning the previous partition within $[t_1,t_2]$, the sequence $\{f_l\}_{l=1}^\infty$ converges to $\dot h^+(t_3)$ for some $t_3 \in (t_1,t_2)$. Further, since each $f_l$ is generated from a partition interval $[t_{i(l)}, t_{j(l)}]$, such that $h(t_{i(l)}) > h(t_{j(l)})$, and the function $h$ is right hand differentiable, we must have $\lim_{l\rightarrow \infty} f_l \leq 0$.  So, we have 
    \begin{equation}\label{eq: LMVT RHD conclusion}
        \dot h^+(t_3) = \lim_{l\rightarrow \infty} f_l  \leq 0,
    \end{equation}
    which contradicts \eqref{eq:RHD contradiction} since $t_3 \in (t_1,t_2)$. Thus, we have the desired lemma.
 \end{proof}
 We are now ready to prove lemma~\ref{lem: Nagumo using RI}.
 \begin{proof}[Proof (Lemma~\ref{lem: Nagumo using RI})]
     Since $h(x)$ is differentiable with respect to $x$ and $x_t$ is continuous in $t$, $h(x_t)$ is continuous with respect to $t$. Since $\dot x_t^+$ is given by~\eqref{eq: sys_dynamics}, then from lemma~\ref{lem: chain rule for right hand derivatives}, we have:
     $\dot h^+(x_t)=\nabla h(x_t) \cdot \dot x_t^+$ exists for every $t \in [0,T]$.
     \par \noindent Assume for contradiction, that
     \begin{equation}\label{eq: Nagumo's contradiction}
         \exists \tau \in [0,T]~~\text{such that } h(x_\tau)<0.
     \end{equation}
     Now from \textit{S-1}, we have $h(x_0)>0$. So, we have $h(x_0)>0>h(x_{\tau})$. Since $h(x_t)$ is continuous in $t$, by the intermediate value theorem, there exists a $ t^* \in (0,\tau)$ such that $h(x_{t^*})=0$ and $h(x_{t'})<0$ for every $t'\in (t^{*},\tau).$
     So, we have $x_{t'} \notin int(\mathcal{C})$ for every $t' \in [t^*,\tau]$. Now \textit{S-2} implies, 
     \begin{equation}\label{eq: positive RHD from S-2}
         \dot h^+(x_t)>0,~~\forall (t^*,\tau).
     \end{equation}
     Now setting $t_1=t^*$, $t_2=\tau$, in lemma~\ref{lem: LMVT Consequence for RHD} and since $h(x_{t^*})=0>h(x_{\tau})$, we have from lemma~\ref{lem: LMVT Consequence for RHD}, that there exists $\tau' \in (t^*,\tau)$ such that:
     \begin{equation}\label{eq: Nagumo's new proof conclusion}
         \dot h^+(x_{\tau'}) \leq 0,
     \end{equation}
     which contradicts \eqref{eq: positive RHD from S-2}. Thus, we have the desired lemma.     
 \end{proof}

 Next we have the following lemmas regarding the properties of algorithm~\ref{alg: main algorithm}.

 \begin{lemma}\label{lem: u_last lemma}
     When algorithm~\ref{alg: main algorithm} is used for dynamical system~\eqref{eq: sys_dynamics}, we have:
     \begin{align}\label{eq: left hand derivative of state}
         \dot x^-_t = f(x_t) + g(x_t)u_{last}
     \end{align}
 \end{lemma}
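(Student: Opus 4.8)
The plan is to use two facts: (i) the applied control $u_\tau$ is right-continuous with left limits and only finitely many jumps on any bounded interval, hence is constant on a left-neighbourhood of each time $t$; and (ii) that constant value is exactly what algorithm~\ref{alg: main algorithm} records in its variable $u_{last}$ when it processes the state $x_t$. Once these are in place, the identity follows from a routine one-sided limit computation.

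First I would fix $t>0$ and apply the finite-jump hypothesis on $[0,t]$ to obtain $\epsilon\in(0,t)$ and a vector $\bar u$ with $u_\tau=\bar u$ for all $\tau\in(t-\epsilon,t)$; in particular $\bar u=u_{t^-}:=\lim_{\tau\to t^-}u_\tau$. By the last line of algorithm~\ref{alg: main algorithm} (``Play $u_t$ and store $u_{last}\leftarrow u_t$''), the value held in $u_{last}$ at the moment the current state $x_t$ is processed is precisely the control played just before $t$, i.e.\ $u_{last}=\bar u$.

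Next, on the interval $(t-\epsilon,t)$ the dynamics~\eqref{eq: sys_dynamics} read $\dot x_\tau=f(x_\tau)+g(x_\tau)\bar u$ with no jumping control, so, since $f,g$ are Lipschitz and $x$ is continuous, $x$ is continuously differentiable on $(t-\epsilon,t)$ and the map $\tau\mapsto f(x_\tau)+g(x_\tau)\bar u$ extends continuously to $\tau=t$ with value $f(x_t)+g(x_t)\bar u$. Then I would compute the left derivative directly: for $0<\delta t<\epsilon$, the fundamental theorem of calculus gives $x_t-x_{t-\delta t}=\int_{t-\delta t}^{t}\big(f(x_\tau)+g(x_\tau)\bar u\big)\,d\tau$; dividing by $\delta t$ and letting $\delta t\to0^+$, continuity of the integrand at $t$ yields $\dot x^-_t=f(x_t)+g(x_t)\bar u=f(x_t)+g(x_t)u_{last}$, which is the claim.

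The only genuine obstacle is the bridge between the discrete pseudocode and the continuous-time trajectory: one must justify that ``the immediate last control action'' recorded by the algorithm at time $t$ equals the left limit $u_{t^-}$, and that a left-neighbourhood on which $u$ is constant exists at all — both resting squarely on the standing regularity assumptions that $u_\tau$ is right-continuous, has left limits, and has finitely many jumps per bounded interval. Everything after that is a standard limit argument; the edge case $t=0$ is vacuous (or covered by the initialization $u_{last}=u_{nom}(x_0)$, $x_0^-=x_0$) and needs no separate treatment.
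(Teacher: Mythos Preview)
Your approach matches the paper's: exploit the finite-jump hypothesis to obtain a left-neighbourhood of $t$ on which $u$ has no discontinuities, express $x_t-x_{t-\delta t}$ as the integral of the right-hand side of~\eqref{eq: sys_dynamics}, and pass to the limit using continuity of $f$, $g$, $x$ together with $\lim_{\tau\to t^-}u_\tau=u_{last}$.

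One point to fix: the finite-jump assumption only gives you \emph{continuity} of $u$ on $(t-\epsilon,t)$, not constancy. A right-continuous function with left limits and finitely many jumps is continuous between jump times, but need not be piecewise constant (e.g.\ $u_\tau=\tau$ has no jumps yet is nowhere constant), so the sentence ``obtain $\epsilon\in(0,t)$ and a vector $\bar u$ with $u_\tau=\bar u$ for all $\tau\in(t-\epsilon,t)$'' is not justified by the standing hypotheses. The paper accordingly keeps $u_s$ inside the integral, writes $x_t-x_{t-\delta t}=\int_{t-\delta t}^{t}\big(f(x_s)+g(x_s)u_s\big)\,ds$, and only at the last step uses $\lim_{s\to t^-}u_s=u_{last}$. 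If you replace your constant $\bar u$ by $u_\tau$ under the integral and defer the identification with $u_{last}$ to the final limit, your argument becomes line-for-line the paper's.
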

 \begin{proof}[Proof (Lemma~\ref{lem: u_last lemma})]
        Since $u_t$ in~\eqref{eq: sys_dynamics} has left hand limits, we have $\lim_{\tau \rightarrow t^-} u_\tau = u_{last}$. Further, since $u_t$ can have only finitely many jumps in any finite interval, and let $\delta t > 0$ be sufficiently small such that $u_t$ is continuous in $[t-\delta t, t]$. Then, we have for any $ \tau \in [t-\delta t, t]$ using~\eqref{eq: sys_dynamics}, $x_\tau = x_{t-\delta t} + \int_{t-\delta t}^\tau  \left ( f(x_s) + g(x_s) u_s \right ) ds $. Now using continuity of $f(x_t)$, $g(x_t)$, and $x_t$ with respect to $t$:
        \begin{align*}
            \dot x^-_t &= \lim_{\delta t \rightarrow 0^+} \frac{x_{t} - x_{t-\delta t}}{\delta t}\\
            &= \lim_{\delta t \rightarrow 0^+} \frac{\int_{t-\delta t}^{t} \left ( f(x_s) + g(x_s) u_s \right )}{\delta t}\\
            &= f(x_t) + g(x_t)\lim_{s \rightarrow t^-} u_s\\
            &= f(x_t) + g(x_t)u_{last}.
        \end{align*} 
 \end{proof}

 \begin{lemma}\label{lem: Bounded u}
     Let algorithm~\ref{alg: main algorithm} be used for the dynamical system~\eqref{eq: sys_dynamics}, with a nominal controller $u_{nom}$ which is finite at any time $t$. Then the control action $u_t$ played by algorithm~\ref{alg: main algorithm} is always finite at any finite time $t$. 
 \end{lemma}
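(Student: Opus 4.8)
The plan is to fix an arbitrary finite horizon $t>0$, record uniform bounds on all the problem data along the closed-loop trajectory over $[0,t]$, and then show $\sup_{\tau\in[0,t]}\|u_\tau\|_2<\infty$ by induction over the finitely many subintervals on which $u$ is continuous; pointwise finiteness of $u_t$ follows. For the first step, the closed-loop state $x_\tau$ is the continuous-in-$\tau$ solution of \eqref{eq: sys_dynamics}, so $\{x_\tau:\tau\in[0,t]\}$ is compact and there is $R<\infty$ with $\|x_\tau\|_2\le R$ on $[0,t]$. Consequently the standing assumptions make $\sup_{\tau\le t}\|f(x_\tau)\|_2$ and $\sup_{\tau\le t}\|g(x_\tau)\|_2$ finite, continuous differentiability of $\phi$ makes $\sup_{\tau\le t}\|\nabla\phi(x_\tau)\|_2$ finite, and $\nabla\phi$ is bounded away from $0$ on the compact set of states visited in $[0,t]$ at which $\phi\le\theta$ (so the denominator in \eqref{eq: alpha} is uniformly bounded below there). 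Finally, Assumption~\ref{Assumption: Singular value bounds} gives $\hat\lambda_{i,\tau}\ge\lambda_{i,\tau}/M_i\ge(\min_{\tau\le t}\lambda_{i,\tau})/M_i>0$, the last minimum being positive because $g(x_\tau)$ has full rank $d$ and its singular values are continuous in $\tau$; hence by \eqref{eq: pseudoinverse of estimated singular value matrix}--\eqref{eq: g inverse estimate} and orthogonality of $U_\tau,V_\tau$ the operator norm of $\hat g^+(x_\tau)$ is bounded on $[0,t]$.

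Next I would analyze the two branches of Algorithm~\ref{alg: main algorithm}. When $\phi(x_\tau)>\theta$ the played action equals the nominal one, finite by hypothesis. When $\phi(x_\tau)\le\theta$, the action is $u_{corr}=u_{last}-\hat g^+(x_\tau)\Gamma_\tau\dot x_\tau^-$ as in \eqref{eq: correction control}; here $\dot x_\tau^-=f(x_\tau)+g(x_\tau)u_{last}$ by Lemma~\ref{lem: u_last lemma}, the scalars $\alpha_\tau$ and $\beta_{i,\tau}$ of \eqref{eq: alpha}--\eqref{eq: betas} are finite by the bounds just established, and the matrix $\Gamma_\tau$ can be taken so that the vector $\Gamma_\tau\dot x_\tau^-$ is bounded on the subinterval: if $\dot x_\tau^-=0$ then $\Gamma_\tau\dot x_\tau^-=0$ and $u_\tau=u_{last}$, while if $\dot x_\tau^-\ne0$ the rank-one choice $\Gamma_\tau=\norm{\dot x_\tau^-}^{-2}\sum_{i=1}^d c_{i}U_{i,\tau}(\dot x_\tau^-)^{\intercal}$ is feasible for \eqref{eq: linear inequalities defining Gamma} with $\Gamma_\tau\dot x_\tau^-=\sum_i c_iU_{i,\tau}$ and coefficients $c_i$ taken finite in terms of $\alpha_\tau,\beta_{i,\tau},M_i,m_i$. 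Combining this with the bound on $\|\hat g^+(x_\tau)\|$ from the first paragraph shows that on each branch $\|u_\tau\|_2$ is controlled by a finite quantity depending only on $R$, the fixed constants of the problem, and either $\|u_{nom}\|_2$ or $\|u_{last}\|_2$.

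What remains, and what I expect to be the crux, is to close the recursion through $u_{last}$. Here I would invoke the standing regularity of the control in \eqref{eq: sys_dynamics} -- left-hand limits everywhere and only finitely many jumps in $[0,t]$ -- to pick times $0=\tau_0<\tau_1<\cdots<\tau_N\le t$ with $u$ continuous on each $[\tau_j,\tau_{j+1})$, and induct on $j$. On $[\tau_0,\tau_1)$ the value of $u_{last}$ used by the algorithm is seeded by the initialization $u_{last}=u_{nom}(x_0)$, which is finite, so the branch estimates give $\sup_{\tau\in[\tau_0,\tau_1)}\|u_\tau\|_2<\infty$ and therefore the left limit $u_{last}$ carried into $\tau_1$ is finite; for the inductive step, finiteness of $\sup_{\tau<\tau_j}\|u_\tau\|_2$ makes the $u_{last}$ entering $[\tau_j,\tau_{j+1})$ finite, and the branch estimates -- whose constants are uniform on $[\tau_j,\tau_{j+1})$ because they come from the global bounds of the first paragraph -- give $\sup_{\tau\in[\tau_j,\tau_{j+1})}\|u_\tau\|_2<\infty$. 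After $N+1$ steps this yields $\sup_{\tau\in[0,t]}\|u_\tau\|_2<\infty$, so in particular $u_t$ is finite. The subtle point is exactly this recursion: the algorithm's ``last action'' at time $\tau$ is the left limit of its own output, so a-priori boundedness on a left neighbourhood is needed before that limit can be asserted finite, and this is precisely what the finitely-many-jumps structure of the control provides.
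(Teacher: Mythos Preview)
Your proposal is correct and follows essentially the same route as the paper's proof: the state is finite because the closed-loop solution is continuous, each ingredient of $u_{corr}$ in \eqref{eq: correction control} (namely $u_{last}$, $\hat g^+(x_t)$, $\Gamma_t$, $\dot x_t^-$) is finite, and an inductive argument through $u_{last}$ closes the loop. Your treatment is more careful than the paper's --- you make the induction rigorous by exploiting the finitely-many-jumps structure of $u$ and you exhibit an explicit rank-one feasible $\Gamma_\tau$, whereas the paper simply asserts ``using an inductive argument, if $u_\tau$ is finite for all $\tau<t$ \ldots'' and ``$\Gamma_t$ can be chosen as a matrix with finite entries'' --- but the skeleton is identical.
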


 \begin{proof}[Proof (Lemma~\ref{lem: Bounded u})]
     From~\eqref{eq: sys_dynamics}, we have a unique solution $x_t$ which is continuous in time $t$. So, the state variable $x_t$ is finite at any finite time $t$. Now at any point $t$, the control action $u_t$ applied by algorithm~\ref{alg: main algorithm} is either the finite control action $u_{nom}$ provided by the nominal controller or the correction control action $u_{corr}$ given by~\eqref{eq: correction control}. Now using an inductive argument, if $u_\tau$ is finite for all $\tau < t$ (vacuously true for $t=0$), then $u_{last}$ is finite. Further, from continuity of $f$ and $g$, and~\eqref{eq: left hand derivative of state} from lemma~\ref{lem: u_last lemma}, we have a finite $\dot x^-_t$ at any finite time $t$. Further, by~\eqref{eq: g inverse estimate} and assumption~\ref{Assumption: Singular value bounds}, $\hat{g}^+(x_t)$ is also finite at any finite time $t$. From~\eqref{eq: linear inequalities defining Gamma}, we also observe that $\Gamma_t$ can be chosen as a matrix with finite entries at any finite time $t$. Thus, by equation~\eqref{eq: correction control}, $u_{corr}$ is finite at any finite time $t$. Thus, in both cases the control action $u_t$ applied by algorithm~\ref{alg: main algorithm} is finite for any finite time $t$.
 \end{proof}


 \begin{lemma}\label{claim: main theorem claim}
    When algorithm~\ref{alg: main algorithm} is used for dynamical system \eqref{eq: sys_dynamics}, the following holds for any time $t$:
    $$ \phi(x_t)\leq \theta \implies \dot \phi^+(x_t)\geq \eta >0.$$
\end{lemma}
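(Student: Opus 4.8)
The plan is to turn the claim into a purely algebraic inequality expressed in the orthonormal basis $\{U_{i,t}\}_{i=1}^d$ and then verify it coordinate by coordinate against the four constraint families \eqref{eq: linear inequalities defining Gamma}. Since $\phi$ is differentiable and $x_t$ is continuous with $\dot x_t^+$ given by \eqref{eq: sys_dynamics}, Lemma~\ref{lem: chain rule for right hand derivatives} gives $\dot\phi^+(x_t)=\langle\nabla\phi(x_t),f(x_t)+g(x_t)u_t\rangle$. When $\phi(x_t)\le\theta$, algorithm~\ref{alg: main algorithm} plays $u_t=u_{last}-\hat g^+(x_t)\Gamma_t\dot x_t^-$ (equation \eqref{eq: correction control}), and Lemma~\ref{lem: u_last lemma} gives $f(x_t)+g(x_t)u_{last}=\dot x_t^-$; substituting, $\dot x_t^+=\dot x_t^--g(x_t)\hat g^+(x_t)\Gamma_t\dot x_t^-$, so that $\dot\phi^+(x_t)=\langle\nabla\phi(x_t),\dot x_t^-\rangle-\langle\nabla\phi(x_t),g(x_t)\hat g^+(x_t)\Gamma_t\dot x_t^-\rangle$.

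Next I would diagonalize $g(x_t)\hat g^+(x_t)$: from the SVD \eqref{eq: SVD of g(.)}, the definitions \eqref{eq: g inverse estimate} and \eqref{eq: pseudoinverse of estimated singular value matrix}, and orthogonality of $V_t$, one gets $g(x_t)\hat g^+(x_t)=U_t\Sigma_t\hat\Sigma_t^+U_t^\intercal=U_t\,\mathrm{diag}(\rho_{1,t},\dots,\rho_{d,t})\,U_t^\intercal$ with $\rho_{i,t}:=\lambda_{i,t}/\hat\lambda_{i,t}$ (using $k_t=d$), and \eqref{eq: singular value bounds} gives $m_i\le\rho_{i,t}\le M_i$. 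Expanding $\nabla\phi(x_t)=\sum_i\beta_{i,t}U_{i,t}$ from \eqref{eq: orthogonal decomposition of grad_phi}--\eqref{eq: betas} and using orthonormality of the $U_{i,t}$, this produces $\langle\nabla\phi(x_t),g(x_t)\hat g^+(x_t)\Gamma_t\dot x_t^-\rangle=\sum_{i=1}^d\rho_{i,t}\beta_{i,t}\langle U_{i,t},\Gamma_t\dot x_t^-\rangle$ and $\norm{\nabla\phi(x_t)}^2=\sum_i\beta_{i,t}^2$. Since the definition of $\alpha_t$ in \eqref{eq: alpha} rearranges to $\langle\nabla\phi(x_t),\dot x_t^-\rangle=\alpha_t\norm{\nabla\phi(x_t)}^2+\eta$, I arrive at
\begin{equation*}
    \dot\phi^+(x_t)=\eta+\sum_{i=1}^d\big(\alpha_t\beta_{i,t}^2-\rho_{i,t}\beta_{i,t}\langle U_{i,t},\Gamma_t\dot x_t^-\rangle\big).
\end{equation*}

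It then remains to show every summand is nonnegative, i.e. $\alpha_t\beta_{i,t}^2\ge\rho_{i,t}\beta_{i,t}\langle U_{i,t},\Gamma_t\dot x_t^-\rangle$ for each $i$. This is immediate when $\beta_{i,t}=0$; otherwise I divide through by $\beta_{i,t}$ (reversing the inequality when $\beta_{i,t}<0$) and split into the four sign patterns of $(\alpha_t,\beta_{i,t})$ that index \eqref{eq: linear inequalities defining Gamma}. In each pattern the matching strict constraint on $\langle U_{i,t},\Gamma_t\dot x_t^-\rangle$ --- in the two nontrivial patterns the constraint itself forces the sign of $\langle U_{i,t},\Gamma_t\dot x_t^-\rangle$ --- combined with whichever of $m_i\le\rho_{i,t}$ or $\rho_{i,t}\le M_i$ is appropriate for that sign, yields exactly $\rho_{i,t}\beta_{i,t}\langle U_{i,t},\Gamma_t\dot x_t^-\rangle<\alpha_t\beta_{i,t}^2$; summing over $i$ and adding $\eta$ gives $\dot\phi^+(x_t)\ge\eta>0$. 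The only genuinely delicate step is this final case check: the point is that the $M_i$-versus-$m_i$ switch built into \eqref{eq: linear inequalities defining Gamma} is precisely what lets one absorb the unknown ratio $\rho_{i,t}$ no matter the sign of $\langle U_{i,t},\Gamma_t\dot x_t^-\rangle$, whereas the rest is the routine chain-rule and SVD bookkeeping of the first two paragraphs (well-definedness of $\alpha_t$ via $\nabla\phi(x_t)\ne0$, and finiteness of all quantities via Lemma~\ref{lem: Bounded u}).
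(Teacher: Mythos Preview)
Your proposal is correct and follows essentially the same approach as the paper's own proof: both compute $\dot\phi^+(x_t)$ via the chain rule, substitute $u_{corr}$ and $\dot x_t^-$, diagonalize $g(x_t)\hat g^+(x_t)$ in the $U_{i,t}$-basis, use the definition of $\alpha_t$ to extract the $\eta$ term, and then verify coordinatewise that each summand is nonnegative using the constraints \eqref{eq: linear inequalities defining Gamma} together with the bounds $m_i\le\lambda_{i,t}/\hat\lambda_{i,t}\le M_i$. The only cosmetic difference is that the paper packages the summand as $\beta_{i,t}\epsilon_{i,t}$ with $\epsilon_{i,t}:=\alpha_t\beta_{i,t}-\rho_{i,t}\langle U_{i,t},\Gamma_t\dot x_t^-\rangle$, whereas you write out $\alpha_t\beta_{i,t}^2-\rho_{i,t}\beta_{i,t}\langle U_{i,t},\Gamma_t\dot x_t^-\rangle$ directly; your final case check is in fact more explicit than the paper's, which simply asserts that any $\Gamma_t$ satisfying \eqref{eq: linear inequalities defining Gamma} makes each term in \eqref{eq: constraint on Gamma Marix} nonnegative.
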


\begin{proof}[Proof (Lemma~\ref{claim: main theorem claim})]
Let $t$ be such that $\phi(x_t)\leq \theta$. Then algorithm~\ref{alg: main algorithm} plays $u = u_{corr}$ given by~\eqref{eq: correction control}. So, we have:
\begin{align}
    \dot \phi^+(x_t)&=\nabla \phi(x_t) \cdot \dot x_t^+\label{eq: derivation 1}\\
    &= \nabla \phi(x_t)\cdot \left(f(x_t)+g(x_t)u_{corr}\right)\label{eq: derivation 2}\\
    &= \nabla \phi(x_t) \cdot \left(f(x_t)+g(x_t)u_{last}-g(x_t)\hat{g}^+(x_t)\Gamma_t\dot x_t^-\right)\label{eq: derivation 3}\\
    &= \nabla \phi(x_t) \cdot \left( \dot x_t^- -g(x_t)\hat{g}^+(x_t)\Gamma_t\dot x_t^-\right)\label{eq: derivation 4}\\
    &= \dot \phi^-(x_t)-\nabla \phi(x_t)\cdot \left( g(x_t)\hat{g}^+(x_t)\Gamma_t\dot x_t^- \right)\label{eq: derivation 5}
\end{align}
where \eqref{eq: derivation 1} uses \eqref{eq: dot phi positive}, \eqref{eq: derivation 2} uses \eqref{eq: sys_dynamics}, \eqref{eq: derivation 3} uses \eqref{eq: correction control}, \eqref{eq: derivation 4} uses \eqref{eq: left hand derivative of state}, and \eqref{eq: derivation 5} uses \eqref{eq: dot phi negative}.
Now let 
\begin{equation}\label{eq: y_t}
    y_t=\Gamma_t\dot x_t^-.
\end{equation}
Then, from \eqref{eq: SVD of g(.)}, \eqref{eq: Singular Value Matrix}, \eqref{eq: pseudoinverse of estimated singular value matrix} and \eqref{eq: g inverse estimate}, we have:
\begin{equation}\label{eq: intermediate matrix products}
    g(x_t)\hat{g}^+(x_t)y_t=\sum_{i=1}^{k_t}\frac{\lambda_{i,t}}{\hat{\lambda}_{i,t}}\langle U_{i,t},y_t \rangle U_{i,t}.
\end{equation}
Now using \eqref{eq: y_t}, and \eqref{eq: orthogonal decomposition of grad_phi} in \eqref{eq: intermediate matrix products}, and the orthonormality of the $U_{i,t}$ vectors we have:
\begin{equation}\label{eq: intermediate sum}
    \nabla \phi(x_t)\cdot \left ( g(x_t)\hat{g}^+(x_t)\Gamma_t\dot x_t^- \right )=\sum_{i=1}^{k_t}\beta_{i,t}\frac{\lambda_{i,t}}{\hat{\lambda}_{i,t}}\langle U_{i,t},y_t \rangle.
\end{equation}

Now, we define $\epsilon_{i,t}$, for every $i\in \{1,\ldots,k_t\}$ as:
\begin{equation}\label{eq: epsilon definition}
    \epsilon_{i,t}\triangleq \alpha_t \beta_{i,t} - \frac{\lambda_{i,t}}{\hat{\lambda}_{i,t}}\langle U_{i,t}, y_t \rangle.
\end{equation}
So we have, 
\begin{align}
    \dot \phi^+(x_t)&=\dot \phi^-(x_t)-\nabla \phi(x_t)\cdot \left( g(x_t)\hat{g}^+(x_t)\Gamma_t\dot x_t^- \right) \nonumber\\
    &=\dot \phi^-(x_t)-\sum_{i=1}^{k_t} \beta_{i,t}\left ( \alpha_t\beta_{i,t}-\epsilon_{i,t} \right )\label{eq: derivation 6}\\
    &=\dot \phi^-(x_t)-\alpha_t \sum_{i=1}^{k_t} \beta_{i,t}^2 + \sum_{i=1}^d \beta_{i,t}\epsilon_{i,t}\label{eq: derivation 7}\\
    &=\dot \phi^-(x_t)-\langle \nabla \phi(x_t),\dot x_t^-\rangle + \eta + \sum_{i=1}^{k_t} \beta_{i,t}\epsilon_{i,t} \label{eq: derivation 8}\\
    &= \eta + \sum_{i=1}^{k_t} \beta_{i,t}\epsilon_{i,t},\label{eq: derivation 9}
\end{align}
where we get \eqref{eq: derivation 6} from substituting \eqref{eq: intermediate sum},\eqref{eq: epsilon definition} in \eqref{eq: derivation 5}, we get \eqref{eq: derivation 8} using \eqref{eq: betas}, \eqref{eq: alpha} and the fact that $u_{i,t}$'s are orthonormal and $k_t = d$, which implies $$\langle \nabla \phi (x_t), \dot x^-_t \rangle - \eta= \alpha_t \norm{\nabla \phi(x_t)}^2=\alpha_t\sum_{i=1}^{d} \beta_{i,t}^2 = \alpha_t\sum_{i=1}^{k_t} \beta_{i,t}^2,$$ and we get \eqref{eq: derivation 9} using \eqref{eq: dot phi negative}.

Now, in order to prove $\dot \phi^+(x_t)\geq \eta >0$, it suffices to show that:
\begin{equation}\label{eq: constraint in terms of epsilon}
    \sum_{i=1}^{k_t} \beta_{i,t}\epsilon_{i,t} \geq 0.
\end{equation}
Substituting for $\epsilon_{i,t}$ from \eqref{eq: epsilon definition} and $y_t$ from \eqref{eq: y_t} in the above constraint \eqref{eq: constraint in terms of epsilon}, we get the following constraint on the matrix $\Gamma_t$:
\begin{equation}\label{eq: constraint on Gamma Marix}
    \sum_{i=1}^{k_t} \beta_{i,t} \left ( \alpha_t\beta_{i,t}-\frac{\lambda_{i,t}}{\hat{\lambda}_{i,t}}U_{i,t}^\intercal \Gamma_t \dot x_t^- \right ) \geq 0.
\end{equation}
Now, using the bounds on the estimates of the singular values of $g(x_t)$ from \eqref{eq: singular value bounds}, we observe that any solution $\Gamma_t$ to the set of $d$ inequalities in \eqref{eq: linear inequalities defining Gamma} would make each of the $d$ terms in~\eqref{eq: constraint on Gamma Marix} non-negative. Thus, our algorithm's choice of $\Gamma_t$ satisfies the overall constraint~\eqref{eq: constraint on Gamma Marix}, thereby implying $\dot \phi(x_t)^+ \geq \eta$ (from \eqref{eq: derivation 9} and \eqref{eq: constraint in terms of epsilon}), and proving the claim. 
\end{proof}

 We are now ready to prove the main result of this section, theorem~\ref{thm: main result}, using lemma~\ref{lem: Nagumo using RI}, lemma~\ref{lem: chain rule for right hand derivatives} and lemma~\ref{claim: main theorem claim}.

\begin{proof}[Proof (Theorem~\ref{thm: main result})]

We set $h(x_t)=\phi(x_t)-\theta$, $\mathcal{C}=\mathcal{S}_{\theta}$ in lemma~\ref{lem: Nagumo using RI}. 
In case of forward invariance, we have $x_0 \in \mathcal{S}_{\theta}$. This makes condition \textit{S-1} of lemma~\ref{lem: Nagumo using RI} true for our choice of $h$ and $\mathcal{C}$.
Since $\phi(x)$ is differentiable with respect to $x$ and $x_t$ is continuous in time $t$, $\phi(x_t)$ is continuous in $t$. Thus, we have $h$ satisfying the conditions of lemma~\ref{lem: Nagumo using RI}. Further, from lemma~\ref{lem: Bounded u}, $u_t$ applied by algorithm~\ref{alg: main algorithm} is finite. Then by equation~\eqref{eq: sys_dynamics} we conclude that $\dot x_t^+$ always exists.
Thus, lemma~\ref{lem: Nagumo using RI} is applicable for our choice of $h$ and $\mathcal{C}$ and lemma~\ref{claim: main theorem claim} guarantees that \textit{S-2} of lemma~\ref{lem: Nagumo using RI} holds true as well. Thus, by lemma~\ref{lem: Nagumo using RI} the set $\mathcal{S}_\theta$ is forward invariant as desired.

Now using lemma~\ref{claim: main theorem claim}, the forward convergence of $\mathcal{S}_{\theta}$ (part $2$ of theorem~\ref{thm: main result}) can also be guaranteed. In particular, we show a forward convergence rate for $\mathcal{S}_{\theta}$ as follows. Let
\begin{equation}\label{eq: distance to safety sub-region}
    d(x_t,\mathcal{S}_{\theta})=\theta-\phi(x_t)>0.
\end{equation}
 Since $x_0 \notin \mathcal{S}_{\theta}$, from lemma~\ref{claim: main theorem claim} we have $\dot \phi^+(x_t)\geq \eta$ whenever $\phi(x_t)<\theta$. This uniform lower bound on $\dot \phi^+$ implies (using mean value theorem on $\phi$), that within time: 
\begin{equation}\label{eq: forward convergence time}
    \tau \leq \frac{d(x_t,\mathcal{S}_{\theta})}{\eta},
\end{equation}
the system reaches the safety region $\mathcal{S}_{\theta}$. Also, for a finite $x_t$ by lemma~\ref{lem: Bounded u}, the control action played by algorithm~\ref{alg: main algorithm}: $u_t$ is bounded. So, by equation~\eqref{eq: sys_dynamics}, within finite duration $\tau$, the state variable $x_t$ remains bounded throughout. So, by using lemma~\ref{lem: Bounded u} in an inductive way, we observe that the state variable $x_t$ as well as the applied control action $u_t$ always stays finite (bounded) in the case of forward convergence as well.
We thus conclude the proof of theorem~\ref{thm: main result}.
\end{proof}

\section{Proof of Theorem~\ref{thm: sample case optimality}}\label{sec: fundamental limit proof}
\begin{proof}[Proof (Theorem~\ref{thm: sample case optimality})]
At time $t=0$, let $x_0 \in \partial\mathcal{S}$, such that $\nabla \phi(x_0)\neq 0$. From the assumption in theorem~\ref{thm: sample case optimality}, we know such a state $x_0 \in \partial \mathcal{S}$ exists. For the above choice of $x_0$, we have $I(0,0)=\{x_0\}$ at $t=0$. Then, we show that any controller observing just the sample $x_0$ cannot guarantee the forward invariance of $\mathcal{S}$.  
Let the controller play $u_0$ at time $t=0$. Choose any $f(.)$ which satisfies $f(x_0)=-\nabla \phi(x_0)-g(x_0)u_0$. We the have:
$$ \dot \phi^+(x_0)=\nabla \phi(x_0) \cdot (f(x_0)+g(x_0)u_0)=-\norm{\nabla \phi(x_0)}^2<0,$$
where the first equality uses~\eqref{eq: sys_dynamics}, lemma~\ref{lem: chain rule for right hand derivatives}, and the above choice of $f(.)$, and the inequality follows since $\nabla \phi(x_0)\neq 0$ for our choice of $x_0$. Further, since $x_0 \in \partial \mathcal{S}$, $\dot \phi(x_0)<0$ implies there exists an  $\epsilon$ such that for any $\tau \in (0,\epsilon)$, $\phi(x_\tau)<0$, i.e. $x_\tau \notin \mathcal{S}$ and thus the algorithm fails to guarantee forward invariance of $\mathcal{S}$ in this interval. Since the function $f(.)$ is unknown to the controller, our choice of the $f(.)$ above is valid, and we can construct such $f$ for every controller $u$. We have thus shown the existence of a system corresponding to that particular controller whose dynamics equation is of the form~\eqref{eq: sys_dynamics}, where the controller fails to guarantee the forward invariance of $\mathcal{S}$ with respect to the system. This proves theorem~\ref{thm: sample case optimality}.\\

\end{proof}

\section{Proof of Theorem~\ref{thm: Gradient Unbiased Estimate}}\label{sec: section 4 proofs}

\begin{proof}[Proof (Theorem~\ref{thm: Gradient Unbiased Estimate})]

Let $\mathcal{H}$ be the set of possible state action trajectories indexed by $e$. Let $\ell_e \in \mathcal{H}$ denote the state action trajectory observed in the roll-out for episode $e$,
\begin{equation}\label{eq: episode trjacetory}
    \ell_e=\{ s_n, a_n, u_n\}_{n=0}^\infty,
\end{equation}
and let 
\begin{equation} \label{eq: trajectory net reward}
    R(\ell_e)=\sum_{n=0}^\infty \gamma^n r(s_n,u_n)
\end{equation}
 denote the discounted sum reward for trajectory $\ell_e=\{s_n,a_n,u_n\}_{n=0}^\infty \in \mathcal{H}$. 

Let $P_w$ denote the probability density function of trajectories induced by the correction controller \eqref{eq: correction control} and the current stochastic policy $\pi_{w}$. By construction, for any trajectory $\ell_e=\{s_0,a_0,u_0,s_1,a_1,u_1,\ldots\}$, $P_w(\ell_e)$  given by 
\begin{align}\label{eq: probability distribution induced for trajectory}
    P_w(\ell_e) &= \pi_w(a_0|s_0)\Pr[u_0=C(s_0,\dot s_{0}^-,a_0,u_{-1})]\Pr[s_1|s_0,u_0] \nonumber \\ 
    & \pi_w(a_1|s_1)\Pr[u_1=C(s_1,\dot s_{1}^-,a_1,u_0)]\Pr[s_2|s_1,u_1]\ldots,
\end{align}
is the induced probability density function supported on the set of possible trajectories $\mathcal{H}$. Further, since $C(s_n,\dot s_n^-, a_n, u_{n-1})$ is a deterministic function given by \eqref{eq: correction control}, so we have $ \Pr [u_n = C(s_n,\dot s_n^-, a_n, u_{n-1})] = 1,~\forall n$ in the above.
Therefore, we can write:     
\begin{align}\label{eq: PG derivation}
    &\nabla_w J(w) = \nabla_w \Ex_{\ell_e \sim P_w} R(\ell_e) \nonumber\\ 
    &=\nabla_w \int_{\ell_e \in \mathcal{H}}R(\ell_e) P_w(\ell_e) d\ell_e \nonumber\\
    &=\int_{\ell_e \in \mathcal{H}} R(\ell_e) \nabla_w P_w(\ell_e) d\ell_e \nonumber\\
    &=\int_{\ell_e \in \mathcal{H}} P_w\left (\ell_e \right )  \left( \frac{\nabla_w P_w(\ell_e)}{P_w(\ell_e)}R(\ell_e)\right ) d\ell_e \nonumber \\
    &=\Ex_{\ell_e \sim P_w} \nabla_w \ln P_w(\ell_e) R(\ell_e) \nonumber \\
    &= \Ex_{\ell_e \sim P_w} R(\ell_e) \nabla_w 
     \{  \ln \pi_w(a_0|s_0) \nonumber \\
     &+ \ln \Pr[u_0=C(s_0,\dot s^-_{0},a_0,u_{-1})] 
     +\ln \Pr[s_1|s_0,u_0] \nonumber \\
     &+ \ln \pi_w (a_1|s_1) + \ldots \}  \nonumber \\ 
    &= \Ex_{\ell_e \sim P_w} \sum_{n=0}^\infty\left [ \nabla_w \ln \pi_w (a_n|s_n) \right ] R(\ell_e) \nonumber\\
    &= \Ex_{\ell_e \sim P_w} \widehat{\nabla J(w)}.
\end{align}
Here, in the second line we exchanged the integral with the gradient using the Leibniz integral rule, and the second to last line uses the fact that $\Pr[s_n|s_{n-1},u_n],~C(s_n,\dot s_n^-, a_n, u_{n-1})$ are independent of the policy parameters $w$. In particular, $\ln \Pr [C(s_n,\dot s_n^-, a_n, u_{n-1})] = 0,~\forall n$, and is independent of $w$.  So, the gradient of these log probabilities with respect to $w$ is zero. From \eqref{eq: PG derivation}, we see that the gradient estimate $\widehat{\nabla J(w)}$ used by algorithm~\ref{alg: REINFORCE with Safety} is indeed an unbiased estimate of the true policy gradient $\nabla_w J(w)$.
\end{proof}


\section{Bicycle dynamics example illustrating the applicability of assumptions~\ref{Assumption: SVD} and~\ref{Assumption: Singular value bounds}}\label{sec: example}

In this section, we show an example of bicycle dynamics where our assumptions~\ref{Assumption: SVD} and~\ref{Assumption: Singular value bounds} are applicable to a reasonable partial model of the dynamics.\\
We consider a special case of our assumed setting where the matrix $g(x_t)$ always admits an SVD with known constant matrices $U_t=U$ and $V_t=V$, and when some gross upper and lower bounds on the singular values are available. For example, in vehicular yaw dynamics of a bicycle, if we only have the knowledge of axis lengths, mass and moment of inertia about the vertical axis ( quantities easier to measure ), then it is sufficient to apply our proposed method since the setting falls under our assumptions~\ref{Assumption: SVD} and~\ref{Assumption: Singular value bounds} as demonstrated below.
The two-dimensional yaw dynamics of a bicycle is typically modeled with state $x_t = \left [ v(t)~r(t) \right ]^\intercal$ as:  
  \begin{equation}\label{eq: bicycle dynamics}
      \begin{bmatrix}
          \dot v(t)\\
          \dot r(t)
      \end{bmatrix} 
      =\begin{bmatrix}
          -\frac{c_r+c_f}{mu} & \frac{c_ra_2-c_fa_1}{mu}-u\\
          \frac{c_ra_2-c_fa_1}{uI_z} & -\frac{c_fa_1^2+c_ra_2^2}{uI_z}
      \end{bmatrix}
      \begin{bmatrix}
          v(t)\\
          r(t)
      \end{bmatrix}
      +\begin{bmatrix}
            \frac{c_f}{m} \\
            \frac{c_fa_1}{I_z}
       \end{bmatrix} \delta_f,
  \end{equation}
where $m$ is the vehicle mass; $a_1,a_2$ are the distances to the center of mass point of the front and rear wheels respectively; $I_z$ is the vehicle's moment of inertia about the $z$-axis (the bicycle moves in the $x$-$y$ plane); $r$ is the angular velocity (anti-clockwise) of the turn being made by the bicycle steering; $u$ is the bicycle's velocity along the $x$-axis (direction of its initial motion); $v$ is the velocity along $y$-axis (direction along which it is turning); $c_f,c_r$ are the unknown cornering stiffness constants of the front and rear wheels; and $\delta_f$ is the only control action which is the angle of left turn applied through the handle (\cite{guiggiani2014science}). Here, the matrix $g(.)$ is constant and admits the following singular value decomposition:
\begin{equation}
    \begin{bmatrix}
        \frac{c_f}{m}\\
        \frac{c_fa_1}{I_z}
    \end{bmatrix}=
    \begin{bmatrix}
       \frac{\lambda_1}{\lambda} & \frac{\lambda_2}{\lambda} \\
       \frac{\lambda_2}{\lambda} & - \frac{\lambda_1}{\lambda}
    \end{bmatrix}
    \begin{bmatrix}
        \lambda\\
        0
    \end{bmatrix}
    \begin{bmatrix}
        1
    \end{bmatrix},
\end{equation}
where $\lambda=\sqrt{(c_f/m)^2+(c_fa_1/I_z)^2}$ is the singular value $\lambda_1=\frac{c_f}{m}$, and $\lambda_2=\frac{c_fa_1}{I_z}$. Now if the stiffness parameters $c_f,c_r$ are unknown (difficult to measure in a realistic setting), then $\lambda_1$, $\lambda_2$ would be unknown scalars, but the stiffness parameters would get cancelled out in the the normalized fractions $\frac{\lambda_1}{\lambda}$ and $\frac{\lambda_2}{\lambda}$ so these fractions would still be known to the user. Also, since $\lambda$ is strictly positive, any rough estimates on the unknown stiffness parameters would allow the user to guess a $\hat{\lambda}$ with $0<m\hat{\lambda}<\lambda<M\hat{\lambda}$, where $m$ and $M$ are known (ratio) bounds. Thus, such a system satisfies our assumptions~\ref{Assumption: SVD} and~\ref{Assumption: Singular value bounds}. 

\section{Implementation Details of Numerical Simulations from section \ref{sec: simulations}}

\subsection{Comparison with Adaptive Safe Control Algorithms}\label{sec: adaptive control simulations}
We consider the simple one dimensional system below:
\begin{align}\label{eq: Simulation 1D System}
    \dot x =1.5x +u,
\end{align}
where $f(x)=1.5x$, $g(x)=1$, with a nominal controller: $u_{nom}(x)=-x$. We define the safe set as $\mathcal{S}=[-0.2,0.2]$, and the initial state is $x_0=0.1999$. 
We use the same shorthand for denoting each algorithm as listed in the main paper section~\ref{sec: simulations}.
\begin{enumerate}
    \item \textbf{[Algorithm~\ref{alg: main algorithm}]} Here, for our algorithm we use the nominal controller $u_{nom}=-x$, which makes the closed loop dynamics:
    \begin{equation*}
      \dot x=0.5x,
    \end{equation*}
    which is exponentially increasing with time and is expected to quickly exit the safe set $\mathcal{S}$. We choose the barrier function $\phi(x)=1-25x^2$, corresponding to our safe set $\mathcal{S}$ and we take $\theta=0.001$, and $\Gamma=4$.
    
    \item \textbf{[aCBF]} For the Adaptive CBF algorithm~\cite{9147463}, we have:\\
     $f(x)=x$, $g(x)=1$, $u_{nom}(x)=-x$, $F(x)=x$, $\theta^*=0.5$,
     $\Gamma =5$, $\hat{\theta}_{0}=0$, $\phi_a(x,\hat{\theta})=1-25x^2.$ This corresponds to the open loop dynamics~\ref{eq: Simulation 1D System}.\\
     This gives us an adaptation rule of:
     \begin{align}\label{eq: adaptation rule}
         \dot{\hat{\theta}}&=\Gamma \tau(x,\hat{\theta}),\\
         \textrm{where } \tau(x,\hat{\theta})&=-F(x)\frac{\partial \phi_a}{\partial x}=50x^2.
     \end{align}
        And for the $\lambda_{cbf}$ function we have:
        \begin{equation}\label{eq: lambda cbf}
            \lambda_{cbf}(x,\hat{\theta})=\hat{\theta}-\Gamma\frac{\partial \phi_a}{\partial \hat{\theta}}=\hat{\theta}.
        \end{equation}
        So, we have the $aCBF$ controller  given by the following quadratic program:
        \begin{align*}
            \text{minimize}_{u \in \br} ~~ & \norm{u-u_{nom}(x)}\\
            \text{subject to}~~ & \sup_{u \in \br} (-50x[x+\hat{\theta}x+u]) \geq 0.
        \end{align*}
    \item \textbf{[RaCBF]} For the RaCBF algorithm~\cite{9129764}, we have the same setting as above on the same system \eqref{eq: Simulation 1D System} and with the same $\lambda_{cbf}$ \eqref{eq: lambda cbf} and adaptation rule \eqref{eq: adaptation rule}. But here, we assume that the error in parameter estimation: $\Tilde{\theta}=\theta^*-\hat{\theta} \in [-\Tilde{\nu},+\Tilde{\nu}]$ (i.e. parameter estimation error lies in a bounded convex region). Here we have taken $\Tilde{\nu}=2$. We also took $\alpha(x)$ as the identity function: $\alpha(x)=x$, so we have the following quadratic program for the RaCBF controller:
        \begin{align*}
            \text{minimize}_{u \in \br} ~~ & \norm{u-u_{nom}(x)}\\
            \text{subject to}~~ & \sup_{u \in \br} (-50x[x+\hat{\theta}x+u])\\ &\geq -\phi_a(x,\hat{\theta})+\frac{2}{5}.
        \end{align*}
    \item \textbf{[RaCBFS]} This algorithm referred to as the RaCBF+SMID algorithm in~\cite{9129764} improves upon the previous algorithm (RaCBF) of the same paper, by updating the uncertainty bound $\Tilde{\nu}$ periodically after every few instances. The algorithm, updates its uncertainty over $\theta^*$ so that it is consistent with the trajectory history so far. Depending on the hyperparameter $D$ (we have taken $D=0.07$), we set $r_{min}$ and $r_{max}$ such that at time $t$: 
    \begin{align*}
        \forall \theta \in [r_{min},r_{max}]:~~~~~~~~&\\
         |\dot{x}_{\tau}-f(x_\tau)-g(x_\tau)u_\tau-F(x_\tau)\theta|&\leq D\\
         ~~\forall \tau \in [0,t).~~~~~~~~~&
    \end{align*}
    Now with this uncertainty quantification over $\theta^*$ given by $r_{min},r_{max}$, the algorithm updates $\Tilde{\nu}$ every $2.5$ms (where the time horizon is 0.25s) as follows:
    \begin{equation*}
        \Tilde{\nu}=\max\left \{|r_{min}-\hat{\theta}|,|r_{max}-\hat{\theta}|\right \}.
    \end{equation*}
    
    \item \textbf{[cbc]} This is the convex body chasing algorithm from~\cite{https://doi.org/10.48550/arxiv.2103.11055}. Here we search for consistent models corresponding to a pair $(\alpha_t,\beta_t)$ at each round $t$ which serve as estimates for the actual parameters $(\alpha^*,\beta^*)$ governing the system dynamics as:
    $\dot x=\alpha^* x+\beta^* u$, with initial uncertainties taken as $|\alpha^*|\leq 5$, and $2.5 \times 10^{-6} \leq \beta^* \leq 0.05$. Discretizing the system dynamics for a small sampling time $T_s=2.5\times 10^{-4}s$, we look for all candidates $(\alpha,\beta)$ consistent with the current trajectory history as the polytope $P_t \subset P_{t-1}$ given by the linear program corresponding to the constraints: $|\alpha x_i+\beta u_i-x_{i+1}|\leq \eta$, for all discrete points $i$ in history up to time $t$, where $\eta$ is a suitably chosen hyperparameter. At every discrete step $t$, then the candidate $(\alpha_t,\beta_t) \in P_t$ is chosen as the Euclidean projection of the previous candidate $(\alpha_{t-1},\beta_{t-1})$ on the current polytope: $P_t$. At every round $t$, once the candidate model parameters $(\alpha_t,\beta_t)$ are found the control action $u_t$ is then chosen as: 
    $u_t=-\frac{\alpha_t}{\beta_t}x_t$. The loss function $\mathcal{G}(x_t,u_t)$ is taken as the indicator function, $\mathcal{G}(x_t,u_t)=\mathbf{1}\{x_t \notin \mathcal{S}\}.$
    
    \item \textbf{[balsa]} This is the Bayesian learning for adaptive safety algorithm from~\cite{9196709}. Here a proxy $\hat{f}(x)$ for the function $f(x)$ is available to the controller and it has full knowledge of $g(x)$.
    From the pseudo control $\mu$, it computes the control action as:
    \begin{align}
        u&=g^{-1}(x)(\mu-\hat{f}(x)),\text{ where }\\
        \mu&=\mu_{rm}+\mu_{pd}+\mu_{qp}-\mu_{ad}.
    \end{align} 
    
    Here, the modelling error $\Delta(x)=f(x)-\hat{f}(x)$ is estimated by a Bayesian Neural Network estimator that takes as input $x$ and predicts $m,s$ as the mean and standard deviation of the prediction $\Delta(x)$. The neural network is trained on the trajectory data history after every $0.025s$, where the total horizon is $0.25s$. From this we sample $\mu_{ad}\sim \mathcal{N}(m,s^2)$ for the adaptive part of the pseudo control to cancel the modelling error. For this setting, we have taken $\mu_{rm}=0$ (since there is no tracking here), $\mu_{pd}(x)=\dot x -g(x)u_{last}+g(x)u_{nom}(x)=\mu_{nominal}(x)$, which is the effective pseudo control for our nominal controller used in~\ref{alg: main algorithm}. Finally the pseudo control $\mu_{qp}$ for safety is obtained by solving the quadratic program:
        \begin{align*}
            \text{minimize}_{\mu_{qp},d \in \br} ~~ & \mu_{qp}^2+5d^2\\
            \text{subject to:}~~ & \phi_0 + \phi_1\mu_{qp}\geq d\\
            &d \leq 0,
        \end{align*}
        where $\phi_0=\nabla \phi(x) \cdot \mu_{pd}(x)$, and $\phi_1=\nabla \phi(x)$.
\end{enumerate}
When $x_0 = 0.1999$, the forward invariance data was observed as shown by figure~\ref{fig: Bar Adaptive Control Plot}, and when $x_0=0.2500$, a forward convergence (recovery) was observed as shown by figure~\ref{fig: Bar Recovery Plot}.
\subsection{Comparison with RL Algorithms}\label{sec: RL simulations}

We consider a 4-dimensional extension of the non-linear vehicular yaw dynamics \eqref{eq: bicycle dynamics} given by with state $x_t = [ V_y, r, \psi, y] ^\intercal $ and the dynamics:

\begin{equation}\label{eq: 4-D vehicular dynamics}
    \dot x_t = \begin{bmatrix}
        \frac{-c_0}{mV} & \frac{-c_1}{mV} -V & 0 & 0\\
        \frac{-c_1}{I_z V} & \frac{-c_2}{I_z V} & 0 & 0\\
        0 & 1 & 0 & 0\\
        \cos \psi & 0 & V_x \frac{\sin \psi}{ \psi } & 0
    \end{bmatrix} \begin{bmatrix}
        V_y \\
        r \\
        \psi \\
        y
    \end{bmatrix}
    +\begin{bmatrix}
        \frac{C_{\alpha f}}{m}\\
         \frac{a C_{\alpha f}}{I_z}\\
         0\\
         0
    \end{bmatrix} \delta,
\end{equation}
where $V_y$ is the velocity of the vehicle along the lateral direction to which it is pointing, $V_x$ is the velocity along the direction the vehicle is headed, $V = \sqrt{V_x^2 + V_y^2}$ is the net magnitude of the vehicle's velocity, $r = \dot \psi$, is the yaw-rate i.e., the clockwise angular velocity along which the vehicle is turning in the $x-y$ plane (horizontal plane, the vehicle intends to take a right turn facing north to facing east eventually), $\psi$ is the yaw i.e., the current heading angle (clockwise) it makes from North towards East directions, and $y$ is the lateral displacement of the vehicle from the starting position (ideally we don't want the vehicle to move much and take a successful right turn without displacing too much). $m$ is the mass of the vehicle, $I_z$ is the moment of inertia of the angle along the vertical axis (direction of gravity) about which the vehicle rotates to take the turn, and $a$ is the distance between the center of the mass of the vehicle to front wheel. The control variable is the scalar variable $\delta$, which is the steering angle applied by the controller.

We assume all the state variables are completely observable, $C_{\alpha f}, m$ are known and $a, I_z$ are known only upto an unknown scaling factor. The parameters $c_0, c_1, c_2, $ depend on several drag and stiffness parameters which are harder to model/estimate.   So, the first matrix (the product of the $4 \times 4$ matrix with the state vector) is the $f(x)$ vector which can be completely unknown to the user, the second matrix ($4 \times 1$ matrix) is the $g(x)$ matrix, where the SVD becomes:
\begin{align} \label{eq: vehicular dynamics 4-D SVD}
    & [C_{\alpha f}/m, a C_{\alpha f}/I_z, 0, 0 ]^\intercal =\\
    & \begin{bmatrix}
    \frac{1/m}{\sqrt{1/m^2 + a^2/I_z^2}} & \frac{-a/I_z}{\sqrt{1/m^2 + a^2/I_z^2}} & 0 & 0 \\
    \frac{a/I_z}{\sqrt{1/m^2 + a^2/I_z^2}} & \frac{1/m}{\sqrt{1/m^2 + a^2/I_z^2}} & 0 & 0\\
    0 & 0 & 1 & 0 \\
    0 & 0 & 0 & 1     
    \end{bmatrix}
    \begin{bmatrix}
    \lambda \\
    0\\
    0\\
    0        
    \end{bmatrix}
    \begin{bmatrix}
    C_{\alpha f} 
    \end{bmatrix}^\intercal ,
\end{align}
where $\lambda = \sqrt{\frac{1}{m^2}+\frac{a^2}{I_z^2}}$.
In the above SVD, the matrices $U$, $V$ are completely known since $m$, $C_{\alpha f}$ are known and the ratio $a/I_z$ is also known. Further, the singular value $\lambda$ is also known for this example. In order to demonstrate the effectiveness of our method, in our simulation we only used an estimate of the singular value $\lambda$ within known factors of $0.2$ and $5$. In particular, we took:  
$m=100,~I_z=20,~a=1,~m=0.2,~M=5,~C_{\alpha f} = 10 $. For the simulation, we took $c_0 = 70$, $c_1= 40$, $c_2 = 180$, and an initial net velocity $V = 5$. All the state variables were initialized at zero, and since the applied control may increase the velocity and yaw rates, they were clipped between $\pm 7$ and $\pm 350$ units respectively while running the dynamics, so as to keep all quantities within a realistic range. Further, the control variable  applied according to the nominal controller (policy network) \eqref{eq: nominal controller}, or the applied correction control \eqref{eq: correction control} was always clipped within $\pm 100$ to stay within realistic controller limitations (such bounds are not within assumptions~\ref{Assumption: Singular value bounds} and~\ref{Assumption: SVD}, but our algorithm still works for this simulation). 

For the reinforcement learning algorithms, we have used a policy network with just two hidden layers of $100$ neurons each, the input layer takes the $4$-dimensional state variable as input and the output layer predicted the mean value of the control action. The action chosen by the reinforcement learning algorithm was a Gaussian random variable as predicted by the policy network and standard deviation $0.7$. The discretization interval was $T_s = 0.02$. All the three algorithms were trained for $500$ episodes and each episode consisted of an integer number of steps until it succeeded to achieve the right turn within a maximum of $1000$ steps (if the agent failed to achieve the goal within $1000$ steps it failed that episode).  All the simulations were done for a set of $10$ experiments and the averaged data within $pm$ standard deviation are plotted in figures~\ref{fig: Model-Free RL Convergence} as the convergence rate plot. The averaged data over these experiments regarding the safety fraction (or percentage) is plotted in figure~\ref{fig: Model-Free Safety Rate}.

The reward was set as 
\begin{align}\label{eq: RL simulation reward}
    &r(s_n,a_n,s_{n+1}) = \nonumber\\
    &\begin{cases}
    -4 + \frac{1}{4}(\frac{1}{(\psi_{n} - \pi/2)^2+0.0001}) \text{ if } |\psi_{n+1} - \pi/2| > \pi/36\\
    7000 \text{ if } |\psi_{n+1} - \pi/2|<\pi/36 \text{ and terminate if this occurs. }
    \end{cases}
\end{align}
Similarly, the cost (penalizing unsafe behavior) for the CPO algorithm benchmark was defined as:
\begin{align}\label{eq: RL simulation cost}
    & c(s_n,a_n,s_{n+1}) = \nonumber \\
    & \min\{ (0.004 (r_{n+1}- 50\pi)^2 + 10^{-6} (V_{y_{n+1}} - 2.5)^2),~0.1 \}.
\end{align}
And the barrier function corresponding to our safety based correction controller was set as:
\begin{equation}\label{eq: RL simulation barrier funcion}
    \phi(x) = 200 - 4(r-50\pi)^2 - 0.001(V_y - 2.5)^2.
\end{equation}
For the correction controller algorithm~\ref{alg: REINFORCE with Safety}, we used an input threshold of $\theta = 500$, and the recovery rate hyper-parameter $\eta = 500$.

    The simple REINFORCE like algorithm from~\cite{williams1992simple} does not care about safety and learns to perform the task in about 47-50 steps after around 400 training episodes. Our algorithm which overwrites the REINFORCE like algorithm's prediction using our correction controller also achieves a similar convergence rate. To compare our result against a safe RL benchmark, we compared it against the
 CPO algorithm~\cite{pmlr-v70-achiam17a} with the single constraint:
$$\sum\limits_{n=0}^\infty \gamma^n c(s_n,a_n,s_{n+1}) \leq 0.01,$$ 
with a cost function given by \eqref{eq: RL simulation cost}. In all the three algorithms, \eqref{eq: RL simulation reward} was used for rewards to learn the optimal policy. The CPO benchmark algorithm learns the safe policy quicker in about 100-150 episodes but the optimal learnt policy takes about 80 steps to perform the task as shown by figure~\ref{fig: Model-Free RL Convergence}. For all three algorithms, a discounting factor of $\gamma = 0.99$ was used.
As figure~\ref{fig: Model-Free Safety Rate} shows even the safe RL benchmark was not entirely safe during the learning phase and the naive REINFORCE like algorithm was very often unsafe compared to our algorithm~\ref{alg: REINFORCE with Safety} which was always safe and achieved zero-violation safety along with learning.



\bibliographystyle{IEEEtran}
\bibliography{references}

\begin{thebibliography}{10}
\providecommand{\url}[1]{#1}
\csname url@rmstyle\endcsname
\providecommand{\newblock}{\relax}
\providecommand{\bibinfo}[2]{#2}
\providecommand\BIBentrySTDinterwordspacing{\spaceskip=0pt\relax}
\providecommand\BIBentryALTinterwordstretchfactor{4}
\providecommand\BIBentryALTinterwordspacing{\spaceskip=\fontdimen2\font plus
\BIBentryALTinterwordstretchfactor\fontdimen3\font minus \fontdimen4\font\relax}
\providecommand\BIBforeignlanguage[2]{{%
\expandafter\ifx\csname l@#1\endcsname\relax
\typeout{** WARNING: IEEEtran.bst: No hyphenation pattern has been}%
\typeout{** loaded for the language `#1'. Using the pattern for}%
\typeout{** the default language instead.}%
\else
\language=\csname l@#1\endcsname
\fi
#2}}

\bibitem{Nagumo1942berDL}
M.~Nagumo, ``{\"U}ber die lage der integralkurven gew{\"o}hnlicher differentialgleichungen,'' 1942.

\bibitem{https://doi.org/10.48550/arxiv.2205.10330}
\BIBentryALTinterwordspacing
S.~Gu, L.~Yang, Y.~Du, G.~Chen, F.~Walter, J.~Wang, Y.~Yang, and A.~Knoll, ``A review of safe reinforcement learning: Methods, theory and applications,'' 2022. [Online]. Available: \url{https://arxiv.org/abs/2205.10330}
\BIBentrySTDinterwordspacing

\bibitem{JMLR:v16:garcia15a}
\BIBentryALTinterwordspacing
J.~Garc{{\'i}}a, Fern, and o~Fern{{\'a}}ndez, ``A comprehensive survey on safe reinforcement learning,'' \emph{Journal of Machine Learning Research}, vol.~16, no.~42, pp. 1437--1480, 2015. [Online]. Available: \url{http://jmlr.org/papers/v16/garcia15a.html}
\BIBentrySTDinterwordspacing

\bibitem{book}
O.~Nelles, \emph{Nonlinear system identification. From classical approaches to neural networks and fuzzy models}, 01 2001.

\bibitem{GROVER2021346}
\BIBentryALTinterwordspacing
J.~Grover, C.~Liu, and K.~Sycara, ``System identification for safe controllers using inverse optimization,'' \emph{IFAC-PapersOnLine}, vol.~54, no.~20, pp. 346--353, 2021, modeling, Estimation and Control Conference MECC 2021. [Online]. Available: \url{https://www.sciencedirect.com/science/article/pii/S2405896321022424}
\BIBentrySTDinterwordspacing

\bibitem{9682628}
A.~Wigren, J.~Wågberg, F.~Lindsten, A.~G. Wills, and T.~B. Schön, ``Nonlinear system identification: Learning while respecting physical models using a sequential monte carlo method,'' \emph{IEEE Control Systems Magazine}, vol.~42, no.~1, pp. 75--102, 2022.

\bibitem{https://doi.org/10.48550/arxiv.1906.11392}
\BIBentryALTinterwordspacing
N.~Matni, A.~Proutiere, A.~Rantzer, and S.~Tu, ``From self-tuning regulators to reinforcement learning and back again,'' 2019. [Online]. Available: \url{https://arxiv.org/abs/1906.11392}
\BIBentrySTDinterwordspacing

\bibitem{9147463}
A.~J. Taylor and A.~D. Ames, ``Adaptive safety with control barrier functions,'' in \emph{2020 American Control Conference (ACC)}, 2020, pp. 1399--1405.

\bibitem{9129764}
B.~T. Lopez, J.-J.~E. Slotine, and J.~P. How, ``Robust adaptive control barrier functions: An adaptive and data-driven approach to safety,'' \emph{IEEE Control Systems Letters}, vol.~5, no.~3, pp. 1031--1036, 2021.

\bibitem{9196709}
D.~D. Fan, J.~Nguyen, R.~Thakker, N.~Alatur, A.-a. Agha-mohammadi, and E.~A. Theodorou, ``Bayesian learning-based adaptive control for safety critical systems,'' in \emph{2020 IEEE International Conference on Robotics and Automation (ICRA)}, 2020, pp. 4093--4099.

\bibitem{9683085}
J.~J. Choi, D.~Lee, K.~Sreenath, C.~J. Tomlin, and S.~L. Herbert, ``Robust control barrier–value functions for safety-critical control,'' in \emph{2021 60th IEEE Conference on Decision and Control (CDC)}, 2021, pp. 6814--6821.

\bibitem{https://doi.org/10.48550/arxiv.2005.07284}
\BIBentryALTinterwordspacing
Q.~Nguyen and K.~Sreenath, ``Robust safety-critical control for dynamic robotics,'' 2020. [Online]. Available: \url{https://arxiv.org/abs/2005.07284}
\BIBentrySTDinterwordspacing

\bibitem{https://doi.org/10.48550/arxiv.2010.16001}
\BIBentryALTinterwordspacing
S.~Dean, A.~J. Taylor, R.~K. Cosner, B.~Recht, and A.~D. Ames, ``Guaranteeing safety of learned perception modules via measurement-robust control barrier functions,'' 2020. [Online]. Available: \url{https://arxiv.org/abs/2010.16001}
\BIBentrySTDinterwordspacing

\bibitem{https://doi.org/10.48550/arxiv.2104.14030}
\BIBentryALTinterwordspacing
R.~K. Cosner, A.~W. Singletary, A.~J. Taylor, T.~G. Molnar, K.~L. Bouman, and A.~D. Ames, ``Measurement-robust control barrier functions: Certainty in safety with uncertainty in state,'' 2021. [Online]. Available: \url{https://arxiv.org/abs/2104.14030}
\BIBentrySTDinterwordspacing

\bibitem{https://doi.org/10.48550/arxiv.2207.14419}
\BIBentryALTinterwordspacing
W.~Luo, W.~Sun, and A.~Kapoor, ``Sample-efficient safe learning for online nonlinear control with control barrier functions,'' 2022. [Online]. Available: \url{https://arxiv.org/abs/2207.14419}
\BIBentrySTDinterwordspacing

\bibitem{LC3}
S.~Kakade, A.~Krishnamurthy, K.~Lowrey, M.~Ohnishi, and W.~Sun, ``Information theoretic regret bounds for online nonlinear control,'' in \emph{Advances in Neural Information Processing Systems}, 2020.

\bibitem{https://doi.org/10.48550/arxiv.2103.11055}
\BIBentryALTinterwordspacing
D.~Ho, H.~M. Le, J.~C. Doyle, and Y.~Yue, ``Online robust control of nonlinear systems with large uncertainty,'' 2021. [Online]. Available: \url{https://arxiv.org/abs/2103.11055}
\BIBentrySTDinterwordspacing

\bibitem{https://doi.org/10.48550/arxiv.1705.08551}
\BIBentryALTinterwordspacing
F.~Berkenkamp, M.~Turchetta, A.~P. Schoellig, and A.~Krause, ``Safe model-based reinforcement learning with stability guarantees,'' 2017. [Online]. Available: \url{https://arxiv.org/abs/1705.08551}
\BIBentrySTDinterwordspacing

\bibitem{Ma_Shen_Bastani_Dinesh_2022}
\BIBentryALTinterwordspacing
Y.~J. Ma, A.~Shen, O.~Bastani, and J.~Dinesh, ``Conservative and adaptive penalty for model-based safe reinforcement learning,'' \emph{Proceedings of the AAAI Conference on Artificial Intelligence}, vol.~36, no.~5, pp. 5404--5412, Jun. 2022. [Online]. Available: \url{https://ojs.aaai.org/index.php/AAAI/article/view/20478}
\BIBentrySTDinterwordspacing

\bibitem{DBLP:journals/corr/abs-2006-09436}
\BIBentryALTinterwordspacing
A.~I. Cowen{-}Rivers, D.~Palenicek, V.~Moens, M.~A. Abdullah, A.~Sootla, J.~Wang, and H.~Ammar, ``{SAMBA:} safe model-based {\&} active reinforcement learning,'' \emph{CoRR}, vol. abs/2006.09436, 2020. [Online]. Available: \url{https://arxiv.org/abs/2006.09436}
\BIBentrySTDinterwordspacing

\bibitem{DBLP:journals/corr/abs-1903-02526}
\BIBentryALTinterwordspacing
J.~Fan and W.~Li, ``Safety-guided deep reinforcement learning via online gaussian process estimation,'' \emph{CoRR}, vol. abs/1903.02526, 2019. [Online]. Available: \url{http://arxiv.org/abs/1903.02526}
\BIBentrySTDinterwordspacing

\bibitem{https://doi.org/10.48550/arxiv.1712.05556}
\BIBentryALTinterwordspacing
K.~Polymenakos, A.~Abate, and S.~Roberts, ``Safe policy search with gaussian process models,'' 2017. [Online]. Available: \url{https://arxiv.org/abs/1712.05556}
\BIBentrySTDinterwordspacing

\bibitem{pmlr-v37-sui15}
\BIBentryALTinterwordspacing
Y.~Sui, A.~Gotovos, J.~Burdick, and A.~Krause, ``Safe exploration for optimization with gaussian processes,'' in \emph{Proceedings of the 32nd International Conference on Machine Learning}, ser. Proceedings of Machine Learning Research, F.~Bach and D.~Blei, Eds., vol.~37.\hskip 1em plus 0.5em minus 0.4em\relax Lille, France: PMLR, 07--09 Jul 2015, pp. 997--1005. [Online]. Available: \url{https://proceedings.mlr.press/v37/sui15.html}
\BIBentrySTDinterwordspacing

\bibitem{https://doi.org/10.48550/arxiv.1606.04753}
\BIBentryALTinterwordspacing
M.~Turchetta, F.~Berkenkamp, and A.~Krause, ``Safe exploration in finite markov decision processes with gaussian processes,'' 2016. [Online]. Available: \url{https://arxiv.org/abs/1606.04753}
\BIBentrySTDinterwordspacing

\bibitem{https://doi.org/10.48550/arxiv.2205.11814}
\BIBentryALTinterwordspacing
L.~Zhang, L.~Shen, L.~Yang, S.~Chen, B.~Yuan, X.~Wang, and D.~Tao, ``Penalized proximal policy optimization for safe reinforcement learning,'' 2022. [Online]. Available: \url{https://arxiv.org/abs/2205.11814}
\BIBentrySTDinterwordspacing

\bibitem{DONG202083}
\BIBentryALTinterwordspacing
Y.~Dong, X.~Tang, and Y.~Yuan, ``Principled reward shaping for reinforcement learning via lyapunov stability theory,'' \emph{Neurocomputing}, vol. 393, pp. 83--90, 2020. [Online]. Available: \url{https://www.sciencedirect.com/science/article/pii/S0925231220301831}
\BIBentrySTDinterwordspacing

\bibitem{pmlr-v70-achiam17a}
\BIBentryALTinterwordspacing
J.~Achiam, D.~Held, A.~Tamar, and P.~Abbeel, ``Constrained policy optimization,'' in \emph{Proceedings of the 34th International Conference on Machine Learning}, ser. Proceedings of Machine Learning Research, D.~Precup and Y.~W. Teh, Eds., vol.~70.\hskip 1em plus 0.5em minus 0.4em\relax PMLR, 06--11 Aug 2017, pp. 22--31. [Online]. Available: \url{https://proceedings.mlr.press/v70/achiam17a.html}
\BIBentrySTDinterwordspacing

\bibitem{bharadhwaj2021conservative}
\BIBentryALTinterwordspacing
H.~Bharadhwaj, A.~Kumar, N.~Rhinehart, S.~Levine, F.~Shkurti, and A.~Garg, ``Conservative safety critics for exploration,'' in \emph{International Conference on Learning Representations}, 2021. [Online]. Available: \url{https://openreview.net/forum?id=iaO86DUuKi}
\BIBentrySTDinterwordspacing

\bibitem{ijcai2020p632}
\BIBentryALTinterwordspacing
L.~Bisi, L.~Sabbioni, E.~Vittori, M.~Papini, and M.~Restelli, ``Risk-averse trust region optimization for reward-volatility reduction,'' in \emph{Proceedings of the Twenty-Ninth International Joint Conference on Artificial Intelligence, {IJCAI-20}}, C.~Bessiere, Ed.\hskip 1em plus 0.5em minus 0.4em\relax International Joint Conferences on Artificial Intelligence Organization, 7 2020, pp. 4583--4589, special Track on AI in FinTech. [Online]. Available: \url{https://doi.org/10.24963/ijcai.2020/632}
\BIBentrySTDinterwordspacing

\bibitem{https://doi.org/10.48550/arxiv.2011.06882}
\BIBentryALTinterwordspacing
M.~Han, Y.~Tian, L.~Zhang, J.~Wang, and W.~Pan, ``Reinforcement learning control of constrained dynamic systems with uniformly ultimate boundedness stability guarantee,'' 2020. [Online]. Available: \url{https://arxiv.org/abs/2011.06882}
\BIBentrySTDinterwordspacing

\bibitem{vuong2018supervised}
\BIBentryALTinterwordspacing
Q.~Vuong, Y.~Zhang, and K.~W. Ross, ``{SUPERVISED} {POLICY} {UPDATE},'' in \emph{International Conference on Learning Representations}, 2019. [Online]. Available: \url{https://openreview.net/forum?id=SJxTroR9F7}
\BIBentrySTDinterwordspacing

\bibitem{Yang2020Projection-Based}
\BIBentryALTinterwordspacing
T.-Y. Yang, J.~Rosca, K.~Narasimhan, and P.~J. Ramadge, ``Projection-based constrained policy optimization,'' in \emph{International Conference on Learning Representations}, 2020. [Online]. Available: \url{https://openreview.net/forum?id=rke3TJrtPS}
\BIBentrySTDinterwordspacing

\bibitem{ray2019benchmarking}
A.~Ray, J.~Achiam, and D.~Amodei, ``Benchmarking safe exploration in deep reinforcement learning,'' \emph{arXiv preprint arXiv:1910.01708}, vol.~7, p.~1, 2019.

\bibitem{https://doi.org/10.48550/arxiv.2201.01918}
\BIBentryALTinterwordspacing
Z.~Qin, D.~Sun, and C.~Fan, ``Sablas: Learning safe control for black-box dynamical systems,'' 2022. [Online]. Available: \url{https://arxiv.org/abs/2201.01918}
\BIBentrySTDinterwordspacing

\bibitem{zhao2021modelfree}
\BIBentryALTinterwordspacing
W.~Zhao, T.~He, and C.~Liu, ``Model-free safe control for zero-violation reinforcement learning,'' in \emph{5th Annual Conference on Robot Learning}, 2021. [Online]. Available: \url{https://openreview.net/forum?id=UGp6FDaxB0f}
\BIBentrySTDinterwordspacing

\bibitem{https://doi.org/10.48550/arxiv.2004.07584}
\BIBentryALTinterwordspacing
J.~Choi, F.~Castañeda, C.~J. Tomlin, and K.~Sreenath, ``Reinforcement learning for safety-critical control under model uncertainty, using control lyapunov functions and control barrier functions,'' 2020. [Online]. Available: \url{https://arxiv.org/abs/2004.07584}
\BIBentrySTDinterwordspacing

\bibitem{NEURIPS2018_4fe51490}
\BIBentryALTinterwordspacing
Y.~Chow, O.~Nachum, E.~Duenez-Guzman, and M.~Ghavamzadeh, ``A lyapunov-based approach to safe reinforcement learning,'' in \emph{Advances in Neural Information Processing Systems}, S.~Bengio, H.~Wallach, H.~Larochelle, K.~Grauman, N.~Cesa-Bianchi, and R.~Garnett, Eds., vol.~31.\hskip 1em plus 0.5em minus 0.4em\relax Curran Associates, Inc., 2018. [Online]. Available: \url{https://proceedings.neurips.cc/paper/2018/file/4fe5149039b52765bde64beb9f674940-Paper.pdf}
\BIBentrySTDinterwordspacing

\bibitem{DBLP:journals/corr/abs-1901-10031}
\BIBentryALTinterwordspacing
Y.~Chow, O.~Nachum, A.~Faust, M.~Ghavamzadeh, and E.~A. Du{\'{e}}{\~{n}}ez{-}Guzm{\'{a}}n, ``Lyapunov-based safe policy optimization for continuous control,'' \emph{CoRR}, vol. abs/1901.10031, 2019. [Online]. Available: \url{http://arxiv.org/abs/1901.10031}
\BIBentrySTDinterwordspacing

\bibitem{DBLP:journals/corr/abs-2002-10126}
\BIBentryALTinterwordspacing
S.~Huh and I.~Yang, ``Safe reinforcement learning for probabilistic reachability and safety specifications: {A} lyapunov-based approach,'' \emph{CoRR}, vol. abs/2002.10126, 2020. [Online]. Available: \url{https://arxiv.org/abs/2002.10126}
\BIBentrySTDinterwordspacing

\bibitem{DBLP:journals/corr/abs-2107-13944}
\BIBentryALTinterwordspacing
A.~B. Jeddi, N.~L. Dehghani, and A.~Shafieezadeh, ``Lyapunov-based uncertainty-aware safe reinforcement learning,'' \emph{CoRR}, vol. abs/2107.13944, 2021. [Online]. Available: \url{https://arxiv.org/abs/2107.13944}
\BIBentrySTDinterwordspacing

\bibitem{article2}
A.~Kumar and R.~Sharma, ``Fuzzy lyapunov reinforcement learning for non linear systems,'' \emph{ISA Transactions}, vol.~67, 01 2017.

\bibitem{article}
S.~Pathak, L.~Pulina, and A.~Tacchella, ``Verification and repair of control policies for safe reinforcement learning,'' \emph{Applied Intelligence}, vol.~48, 04 2018.

\bibitem{pmlr-v120-taylor20a}
\BIBentryALTinterwordspacing
A.~Taylor, A.~Singletary, Y.~Yue, and A.~Ames, ``Learning for safety-critical control with control barrier functions,'' in \emph{Proceedings of the 2nd Conference on Learning for Dynamics and Control}, ser. Proceedings of Machine Learning Research, A.~M. Bayen, A.~Jadbabaie, G.~Pappas, P.~A. Parrilo, B.~Recht, C.~Tomlin, and M.~Zeilinger, Eds., vol. 120.\hskip 1em plus 0.5em minus 0.4em\relax PMLR, 10--11 Jun 2020, pp. 708--717. [Online]. Available: \url{https://proceedings.mlr.press/v120/taylor20a.html}
\BIBentrySTDinterwordspacing

\bibitem{wagener2021safe}
N.~Wagener, B.~Boots, and C.-A. Cheng, ``Safe reinforcement learning using advantage-based intervention,'' 2021.

\bibitem{williams1992simple}
R.~J. Williams, ``Simple statistical gradient-following algorithms for connectionist reinforcement learning,'' \emph{Machine learning}, vol.~8, no.~3, pp. 229--256, 1992.

\bibitem{https://doi.org/10.48550/arxiv.1502.05477}
\BIBentryALTinterwordspacing
J.~Schulman, S.~Levine, P.~Moritz, M.~I. Jordan, and P.~Abbeel, ``Trust region policy optimization,'' 2015. [Online]. Available: \url{https://arxiv.org/abs/1502.05477}
\BIBentrySTDinterwordspacing

\bibitem{https://doi.org/10.48550/arxiv.1707.06347}
\BIBentryALTinterwordspacing
J.~Schulman, F.~Wolski, P.~Dhariwal, A.~Radford, and O.~Klimov, ``Proximal policy optimization algorithms,'' 2017. [Online]. Available: \url{https://arxiv.org/abs/1707.06347}
\BIBentrySTDinterwordspacing

\bibitem{https://doi.org/10.48550/arxiv.1509.02971}
\BIBentryALTinterwordspacing
T.~P. Lillicrap, J.~J. Hunt, A.~Pritzel, N.~Heess, T.~Erez, Y.~Tassa, D.~Silver, and D.~Wierstra, ``Continuous control with deep reinforcement learning,'' 2015. [Online]. Available: \url{https://arxiv.org/abs/1509.02971}
\BIBentrySTDinterwordspacing

\bibitem{guiggiani2014science}
M.~Guiggiani, ``The science of vehicle dynamics,'' \emph{Pisa, Italy: Springer Netherlands}, vol.~15, 2014.

\end{thebibliography}


    

\end{document}